\newcommand\semihuge{\@setfontsize\semihuge{22.3}{22}}
\newtheorem{theorem}{\bf Theorem}
\newtheorem{lemma}{\bf Lemma}
\renewcommand{\algorithmicrequire}{\textbf{Input:}}
\renewcommand{\algorithmicensure}{\textbf{Output:}}
\begin{document}
	
\title{  Beyond 5G with UAVs: Foundations of a 3D  Wireless Cellular Network \vspace{-0.1cm}}    

\author{\IEEEauthorblockN{  Mohammad Mozaffari$^1$, Ali Taleb Zadeh Kasgari$^2$, Walid Saad$^2$, Mehdi Bennis$^3$, and M\'erouane Debbah$^4$}\vspace{0.15cm}\\
	\IEEEauthorblockA{
		\small 	$^1$ Ericsson Research, Santa Clara, CA, USA, Email: \url{mohammad.mozaffari@ericsson.com}.\vspace{-0.00cm}\\
		$^2$ Wireless@VT, Electrical and Computer Engineering Department, Virginia Tech, VA, USA, Emails:\url{{alitk,  walids}@vt.edu}.\vspace{-0.00cm}\\
		$^3$ CWC - Centre for Wireless Communications, University of Oulu, Finland, Email: \url{bennis@ee.oulu.fi}.\vspace{-0.00cm}\\
		$^4$ Mathematical and Algorithmic Sciences Lab, Huawei France R \& D, Paris, France, and CentraleSup´elec,\vspace{-0.00cm}\\   Universit\`e Paris-Saclay, Gif-sur-Yvette, France, Email: \url{merouane.debbah@huawei.com}.
		\thanks{A preliminary conference version of this work appears in \cite{GlobecomVersion}.}
		\thanks{Mohammad Mozaffari joined Ericsson in July 2018. He was with Wireless@VT, Electrical and Computer Engineering Department, Virginia Tech, VA, USA, when this work was done.}
	}\vspace{-0.42cm}}
\maketitle\vspace{-0.8cm}
\vspace{-0.1cm}
\begin{abstract}
In this paper, a novel concept of three-dimensional (3D) cellular networks, that integrate drone base stations (drone-BS) and cellular-connected drone users (drone-UEs), is introduced. For this new 3D cellular architecture, a novel framework for network planning for drone-BSs as well as latency-minimal cell association for drone-UEs is proposed. For network planning, a tractable method for drone-BSs' deployment based on the notion of truncated octahedron shapes is proposed that ensures full coverage for a given space with minimum number of drone-BSs. In addition, to characterize frequency planning in such 3D wireless networks, an analytical expression for the feasible integer frequency reuse factors is derived. Subsequently, an optimal 3D cell association scheme is developed for which the drone-UEs' latency, considering transmission, computation, and backhaul delays, is minimized. To this end, first, the spatial distribution of the drone-UEs is estimated using a kernel density estimation method, and the parameters of the estimator are obtained using a cross-validation method. Then, according to the spatial distribution of drone-UEs and the locations of drone-BSs, the latency-minimal 3D cell association for drone-UEs is derived by exploiting tools from optimal transport theory. 
Simulation results show that the proposed approach reduces the latency of drone-UEs compared to the classical cell association approach that uses a signal-to-interference-plus-noise ratio (SINR) criterion. In particular, the proposed approach yields a reduction of up to 46\% in the average latency compared to the SINR-based association. The results also show that the proposed latency-optimal cell association improves the spectral efficiency of a 3D wireless cellular network of drones.

\end{abstract} 

\section{Introduction}

Recent reports from the federal aviation administration (FAA)  shows that the number of unmanned aerial vehicles (UAVs), also known as drones, will exceed 7 million in 2020 \cite{FAAUAV}. Such a massive use of drones will have significant impacts on wireless networking. From a wireless  perspective, the two key roles of drones include: aerial base station (BS), and user equipment (UE) \cite{mozaffari2018tutorial,wu2018uav, bor}. Due to their flexibility and inherent ability for line-of-sight (LoS) communications, drone-BSs can provide broadband, wide-scale, and reliable wireless connectivity during  disasters and temporary
events \cite{mozaffari2, Kalantari, Zhang,  wu2018uav, wu2018common,HouraniModeling, bor, Ding1}. 
 Moreover, drone-BSs offer a promising solution for ultra-flexible deployment and cost-effective wireless services, without the prohibitive costs of terrestrial BSs. 


Meanwhile, drones can also act as UEs (i.e., cellular-connected drone-UEs) that must connect to a wireless network so as to operate. In particular, cellular-connected drone-UEs can be used for wide range of applications such as package delivery \cite{Anibal},
surveillance, remote sensing, and virtual reality.
  The key feature of drone-UEs is their ability to intelligently move in three dimensions and optimize their trajectory in order to efficiently complete their missions. Therefore, drone-UEs are widely used for delivery purposes such as in Amazon's prime air drone delivery service and drug delivery in medical applications.

Wireless networking with drones faces a number of challenges. For instance, for drone-BSs,  key design problems include 3D deployment and  network planning, performance analysis, resource allocation, and cell association.
For drone-UEs, there is a need for reliable and low latency communications  for efficient control. However, existing terrestrial cellular networks have been primarily designed for supporting ground users and are not able to readily serve aerial users. In fact, terrestrial BSs may not be able to meet the low-latency and reliable communication requirements of drone-UEs due to blockage effects and the specific design of the BSs' antennas which are not suitable for supporting users at high-elevation angles. Also, in areas with geographical constraints, terrestrial BSs may not be available to provide wireless service to drone-UEs. In such cases, the deployment of aerial drone-BSs is a promising opportunity for providing reliable wireless connectivity for drone-UEs.  Clearly, to support drones in wireless networking applications, there is a need for developing the novel concept of a \emph{3D cellular network} that incorporates both drone-BSs and drone-UEs. \vspace{-0.2cm}

\subsection{Related Works on Drone Communications}
Recent studies on drone communications have investigated various design challenges that include performance characterization, trajectory optimization, 3D deployment, user-to-drone association, and cellular-connected UAVs. For instance, in \cite{Kalantari}, the authors proposed an algorithm for jointly optimizing the locations and number of drones to maximize wireless coverage. The work in \cite{ALZ2} studied the optimal 3D
deployment of UAVs for maximizing the number of covered ground users under quality-of-service
(QoS) constraints. In \cite{FarajLetter}, the authors  proposed a framework for strategic placement of multiple drone-BSs that provides wireless connectivity for a large-scale ground network. However, the prior studies on  deployment of UAV base stations ignore the existence of flying drone-UEs.

In addition, the work in \cite{Letter_OT} presented a delay-optimal cell association scheme in a UAV-assisted terrestrial wireless network. The work in \cite{Vishal} studied the optimal user-UAV association for capacity improvement in UAV-enabled heterogeneous wireless networks. \textcolor{black}{The work in \cite{LyuOffloading} proposed a novel  hybrid network architecture for cellular systems by using UAVs as aerial base stations for data offloading. In particular, with the proposed framework in \cite{LyuOffloading},  the minimum throughput of mobile terminals is maximized by jointly optimizing the user partitioning,  the spectrum allocation, as well as the UAV trajectory.} In \cite{KalantariAssociation}, the authors proposed an algorithm for maximizing the sum-rate of ground users by joint optimization of user-to-drone-BSs association and wireless backhaul bandwidth allocation. The work in \cite{MozaffariFlightTime} proposed a novel cell association approach that maximizes the total data delivered to ground users by drone-BSs that have limited flight endurance. However, the previous works on user association in drone networks are limited to ground users and do not consider 3D aerial users. Moreover, the previous works do not analyze latency (due to e.g., communication, computation, and backhaul) which is a key metric in 3D drone communication systems.

While there exists a number of studies on cellular-connected drone-UEs \cite{Coexistence,challita2018cellular,zhang2017cellular}, the potential use of drone-BSs for serving drone-UEs has not been considered. For example,  in \cite{Coexistence}, the authors studied the coexistence of drone-UEs and ground users in cellular networks and characterized the downlink coverage performance. The work in \cite{challita2018cellular} proposed an interference-aware path planning approach for drone-UEs with the goal of minimizing their communication latency and their interference on terrestrial users. In \cite{azari2017reshaping}, the authors analyzed the downlink coverage performance of drone-UEs that communicate with terrestrial base stations. In \cite{zhang2017cellular}, the authors proposed a trajectory design scheme for minimizing the mission time of a single UAV-UE. Meanwhile, the authors in \cite{lyu2017blocking} characterized the performance of drone-UEs in uplink communications with ground BSs in terms of blocking probability and average achievable throughput. However, the existing studies on cellular-connected UAVs do not exploit the deployment of aerial base stations for enabling low-latency and reliable drone-UEs' communications.

However,  none of these previous works \cite{mozaffari2,wu2018uav, bor,Kalantari,KalantariAssociation, zhang2017cellular, MozaffariFlightTime,Coexistence,challita2018cellular,FarajLetter,Zhang,ALZ2,Letter_OT,Vishal, Naderi}, studied a 3D wireless network of co-existing aerial base stations and users (i.e., drone-BSs and drone-UEs) while addressing network planning, deployment, and latency-aware cell association problems. \vspace{-0.1cm}

\subsection{Contributions}
The main contribution of this paper is to introduce the novel concept of a fully-fledged drone-based 3D cellular network that incorporates drone-UEs, low-altitude platform (LAP)
drone-BSs, and high-altitude platform (HAP) drones. In this new 3D cellular network
architecture, we propose a framework for addressing the two fundamental problems of network planning and 3D cell association.  In particular, our proposed framework includes a tractable approach for three-dimensional placement and frequency planning for drone-BSs, as well as a latency-minimal 3D cell association scheme for servicing drone-UEs. For deployment, we introduce a new approach based on truncated octahedron cells that determines the minimum number of drone-BSs that can cover a 3D space, along with their locations. Furthermore, for frequency planning in the proposed 3D wireless network, we derive an analytical expression for the feasible integer frequency reuse factors. To perform latency-minimal 3D cell association, first, we estimate the spatial distribution of drone-UEs by using a kernel density estimation method. Then, given the locations of drone-BSs and the distribution of drone-UEs, we find the optimal 3D cell association for which the total latency of serving drone-UEs is minimized. In this case, we analytically characterize the optimal 3D cell partitions by exploiting tools from optimal transport theory. Our results show that the proposed approach significantly reduces the latency of serving drone-UEs, compared to classical cell association approach that uses signal-to-interference-plus-noise ratio (SINR) criterion. In particular, our approach yields around 46\% reduction in the average total latency compared to the SINR-bases association. The results also reveal that our latency-optimal cell association improves spectral efficiency of the considered 3D wireless network with drones.   

The rest of this paper is organized as follows. In Section II, we present the system model.
In Section III, the three-dimensional placement of drone-BSs is investigated. In Section IV, we describe our approach for estimating the spatial distribution of drone-UEs. Section V presents the proposed latency-optimal cell association scheme. Simulation results are provided in Section
VI and conclusions are drawn in Section VII.

\begin{figure}[!t]
	\begin{center}
		\vspace{-0.1cm}
		\includegraphics[width=12cm]{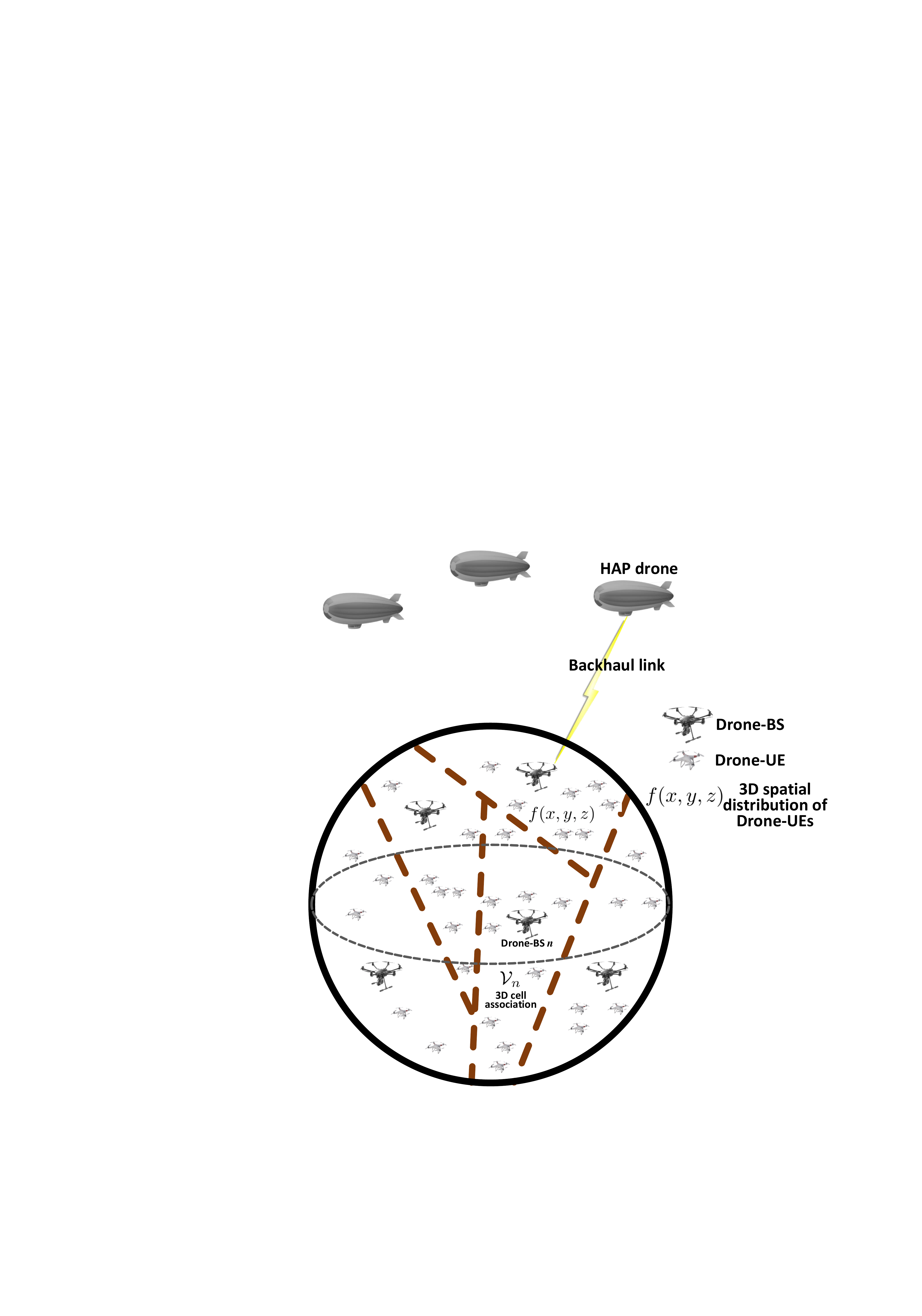}
		\vspace{-0.4cm}
		\caption{The proposed 3D wireless network with drone-BSs, drone-UEs, and HAP drones. \vspace{-.82cm}}
		\label{SystemModel}
	\end{center}	\vspace{-0.4cm}
\end{figure} 
 
\section{System Model}
Consider a 3D cellular network  composed of $L$ drone users,  $N$ LAP drone base stations, and a number of HAP drones, as shown in Fig.\,\ref{SystemModel}. We represent the sets of drone-UEs, and drone-BSs, respectively, by $\mathcal{L}$, and $\mathcal{N}$. \textcolor{black}{Here, we focus on a stand-alone aerial network that consists of flying drones.} 
In this aerial network, drone-BSs serve drone-UEs\footnote{\textcolor{black}{Note that, drone-BSs are an essential part of a 3D cellular network, since drone-UEs are not capable of continuously maintaining  LoS links and sending uplink data directly to HAPs due to their mobility and energy limitations.}} in the downlink, and \emph{HAP drones provide a wireless backhaul connectivity} \cite{horwath2007experimental} for drone-BSs. The key advantage of HAP drones is their ability to adjust their positions according to the locations of drone-BSs. In addition, due to their high altitudes, HAPs can establish LoS backhaul links to the drone-BSs. \textcolor{black}{Therefore, while it is possible to use various types of backhaul for the proposed 3D cellular network \cite{DaSilva2018ICC}, we used HAPs that can establish free space optical communications (FSO) backhaul links to the UAV-BSs due to the improved reliability and lower latency of  this link compared to a terrestrial BS backhaul.}  
  \textcolor{black}{In our proposed 3D cellular network, we adopt omni-directional antennas for drone-BSs to enable a full 3D connectivity.} Here, the deployment of drone-BSs is performed based on a 3D cellular structure which will be presented in Section III. 
   For backhaul connectivity, we assume that each drone-BS connects to its closest HAP that can provide a maximum rate.  We denote the backhaul transmission rate for drone-BS $n$ by $C_n$, which is assumed to be given in our model\footnote{The backhaul rate can be easily calculated based on the locations of HAPs and drone-BSs, transmit power of HAPs, and bandwidth of backhaul links.}.  Drone-BS $n$ uses transmit power $P_n$ bandwidth $B_n$ in order to serve its associated flying drone-UEs. Let $f(x,y,z)$ be the spatial probability density function of drone-UEs which represents the probability that each drone-UE is present around a 3D location $(x,y,z)$.
   In our model, drone-BSs use machine learning tools to estimate the spatial probability distribution of drone-UEs, for a certain period of time, based on any available prior information about the drone-UEs' locations. By performing such estimation, the network will no longer need to continuously track the locations of flying drone-UEs thus alleviating the associated overhead.    
   To find the 3D cell association when serving drone-UEs, we partition the space into $N$ 3D cells each of which representing a volume that must be serviced by one drone-BS. Let $\mathcal{V}_n$ be a 3D space (i.e., 3D cell) associated to drone-BS $n$ that serves drone-UEs located within this cell. The average number of drone-UEs inside $\mathcal{V}_n$ is given by:
\begin{equation}
K_n=L\int_{\mathcal{V}_n}{f(x,y,z) \textrm{d}x\textrm{d}y\textrm{d}z}.
\end{equation}

We assume that each drone-BS adopts a frequency division multiple access (FDMA) technique (as done in \cite{MozaffariFlightTime} and \cite{FDMA2}) when servicing its associated drone-UEs. Hence, the average downlink transmission rate from a drone-BS $n$ to a drone-UE located at $(x,y,z)$ is:
\begin{equation}
R_n(x,y,z)=\frac{B_n}{K_n}\log_2\big(1+\gamma_n(x,y,z)\big),
\end{equation}
where $\frac{B_n}{K_n}$ is the amount of bandwidth for servicing each drone-UE located in $\mathcal{V}_n$, which is determined by sharing the total bandwidth among the drone-UEs. $\gamma_n(x,y,z)$ is the SINR for a drone-UE located at $(x,y,z)$ served by drone-BS $n$. 

We consider the \emph{average latency} in servicing drone-UEs as our main performance metric. In particular, we consider transmission latency in drone-BSs to drone-UEs communications, backhaul latency in drone-BSs to HAP drones links, and computation latency for drone-BSs that serve drone-UEs. The transmission latency for a drone-UE located at $(x,y,z)$ which is served by drone-BS $n$ is\,\footnote{\textcolor{black}{(\ref{TransLat}) represents the transmission latency which is the time required to transmit an entire packet \cite{Hernandez2007Delay}.}}:
\begin{equation} \label{TransLat}
\tau^\textrm{Tr}_n(x,y,z,K_n)=\frac{\beta}{R_n(x,y,z)},
\end{equation}
where $\beta$ is the number of bits per packet that must be transmitted to each drone-UE.

The backhaul latency depends on the load of drone-BSs and the backhaul transmission rates. In this case, the average backhaul latency in drone-BS $n$ to its corresponding HAP-drone communications is given by:
\begin{equation}
\tau^\textrm{B}_n(K_n)=\frac{\beta L \displaystyle\int_{\mathcal{V}_n}{f(x,y,z) \textrm{d}x\textrm{d}y\textrm{d}z}}{C_n}=\frac{\beta K_n}{C_n},
\end{equation}
where $C_n$ is the maximum backhaul transmission rate for drone-BS $n$, and $\beta L\int_{\mathcal{V}_n}{f(x,y,z) \textrm{d}x\textrm{d}y\textrm{d}z}$ represents the average load on drone-BS $n$.

The computation time depends on the data size (i.e., load) that must be processed in each drone-BS, and the processing speed. To model the computational latency at drone-BS $n$, we use function $g_n(\beta K_n)$ with $\beta K_n$ being the total data size that must be processed at the drone-BS. Therefore, the total latency experienced by any arbitrary  drone-UE located at $(x,y,z)$ while being served by drone-BS $n$ can be given by:
\begin{equation}
\tau^\textrm{tot}_n(x,y,z,K_n)=\tau^\textrm{Tr}_n(x,y,z,K_n)+\tau^\textrm{B}_n(K_n)+g_n(\beta K_n),
\end{equation}

 Given this model, our goal is to minimize the average latency of drone-UEs by finding the optimal 3D cell association in drone-BSs to drone-UEs communications. In particular, given the locations of drone-BSs which are deployed based on a 3D cellular structure (in Section III), and the estimated spatial distribution of drone-UEs (in Section IV), we determine the optimal 3D cell partitions $\mathcal{V}_n$, $\forall n\in \mathcal{N}$ that lead to a minimum average latency for drone-UEs. In this regard, our 3D cell association optimization problem can be posed as follows:
  \begin{align} \label{OPT1}
  \mathop {\min }\limits_{{\mathcal{V}_1,...,\mathcal{V}_N}} &\mathlarger{\sum}\limits_{n = 1}^N \Bigg[ {\bigintssss_{{\mathcal{V}_n}}{\tau^\textrm{Tr}_n\big(x,y,z, K_n\big) f(x,y,z)}} \textrm{d}x\textrm{d}y\textrm{d}z+\tau^\textrm{B}_n(K_n)+g_n(\beta K_n)\Bigg], \\
  \textrm{s.t.}\,\,
  &{\mathcal{V}_l} \cap {\mathcal{V}_m} = \emptyset ,\,\,\,\forall l \ne m \in \mathcal{N}, \label{area1}\\
  &\bigcup\limits_{n \in \mathcal{N}} {{\mathcal{V}_n}}  = \mathcal{V}, \label{area2}
  \end{align}
 where  $K_n=L\int_{\mathcal{V}_n}{f(x,y,z) \textrm{d}x\textrm{d}y\textrm{d}z}$ is the average number of drone-UEs in $\mathcal{V}_n$ which depends on the 3D cell association, and $\mathcal{V}$ is the entire considered space in which drone-UEs can fly. The constraints in  (\ref{area1}) and (\ref{area2}) ensure that the 3D association spaces are disjoint and their union covers the considered space $\mathcal{V}$. {\color {black}
 	Table \ref{TablePara} provides a list of our main parameters and notations.}

 \begin{table}[!t]
 	\normalsize
 	\begin{center}{\color {black}
 		\caption{\small List of notations.}
 		\vspace{-0.1cm}
 		\label{TablePara}
 		\resizebox{12cm}{!}{
 			\begin{tabular}{|c|c|}
 				\hline
 				\textbf{Notation} & \textbf{Description} \\ \hline \hline
 				$f_c$	&     Carrier frequency         \\ \hline 
 				$P_n$	&    Drone-BS transmit power      \\ \hline
 				
 				$N_o$	&     Noise power spectral density    \\ \hline
 				
 				$L$	&     Number of drone-UEs     \\ \hline

 				$B_n$	&    Bandwidth for each drone-BS      \\ \hline
 				
 				$\alpha$	&    Path loss exponent      \\ \hline
 				
 				$\eta$	&     Path loss constant     \\ \hline
 				
 				$\beta$	&     Packet size for drone-UE      \\ \hline
 				
 				$q$	&     Frequency reuse factor      \\ \hline
 				
 				$C_n$	&     Backhaul rate for drone-BS $n$    \\ \hline

 				$f(x,y,z)$	&     Spatial distribution of drone-UEs      \\ \hline

 				$\hat{f}(x,y,z)$	&     Estimated spatial distribution of drone-UEs      \\ \hline
 				
 				$L$	&     Number of drone-UEs      \\ \hline

 				$\mathcal{V}_n$	&     3D cell partition associated with drone-BS $n$    \\ \hline
 				
 				$K_n$	&     Average number of drone-UEs inside $\mathcal{V}_n$    \\ \hline
 				
 				$R_n(x,y,z)$	&     average transmission rate from drone-BS $n$ to a drone-UE  located at $(x,y,z)$     \\ \hline
 				
 				$\tau^\textrm{Tr}_n$	&     Transmission latency for drone-BS $n$    \\ \hline
 				
 				$\tau^\textrm{B}_n$	&     Backhaul latency for drone-BS $n$      \\ \hline
 				
 				$\tau^\textrm{tot}_n(x,y,z,K_n)$	&     Total latency experienced by a drone-UE located at $(x,y,z)$ served by drone-BS $n$     \\ \hline
 				
 				$R$	&     Edge length of a truncated octahedron      \\ \hline
 				
 				$\gamma_n(x,y,z)$	&     SINR for a drone-UE located at $(x,y,z)$ served by drone-BS $n$      \\ \hline
 				
 				$g_n$	&     Computational latency at drone-BS $n$   \\ \hline
 				
 				$\omega_n$	&    Computation constant (i.e., speed) for each drone-BS     \\ \hline

 				$\mu_x, \mu_y, \mu_z$& Mean of the truncated Gaussian distribution in $x$, $y$, and $z$ directions\\ \hline
 				
 				$\sigma_x, \sigma_y, \sigma_z$& Standard deviation of the distribution in $x$, $y$, and $z$ directions \\ \hline
 		\end{tabular}}
 	}
 	\end{center}
 \end{table}


In Fig.\,\ref{Diagram}, we summarize the key steps for developing our proposed drone-based 3D cellular network architecture. First, we plan the network deployment of drone-BSs based on a truncated octahedron scheme that can ensure full coverage with a minimum number of drone-BSs. Second, using some available information about the locations' history of drone-UEs, we estimate the 3D spatial distribution of the drone-UEs for a given period of time. Finally, given the locations of drone-BSs and the spatial distribution of drone-UEs, we derive an optimal 3D cell association rule for which the latency of servicing drone-UEs is minimized. \textcolor{black}{Note that, we consider a relatively long-term deployment of drones which can be updated after a specific period of time, if needed. For each deployment configuration, one needs to optimally perform cell association based on the distribution of drone-UEs so as to enhance the system performance.}

\begin{figure}[!t]
	\begin{center}
		\vspace{-0.1cm}
		\includegraphics[width=14cm]{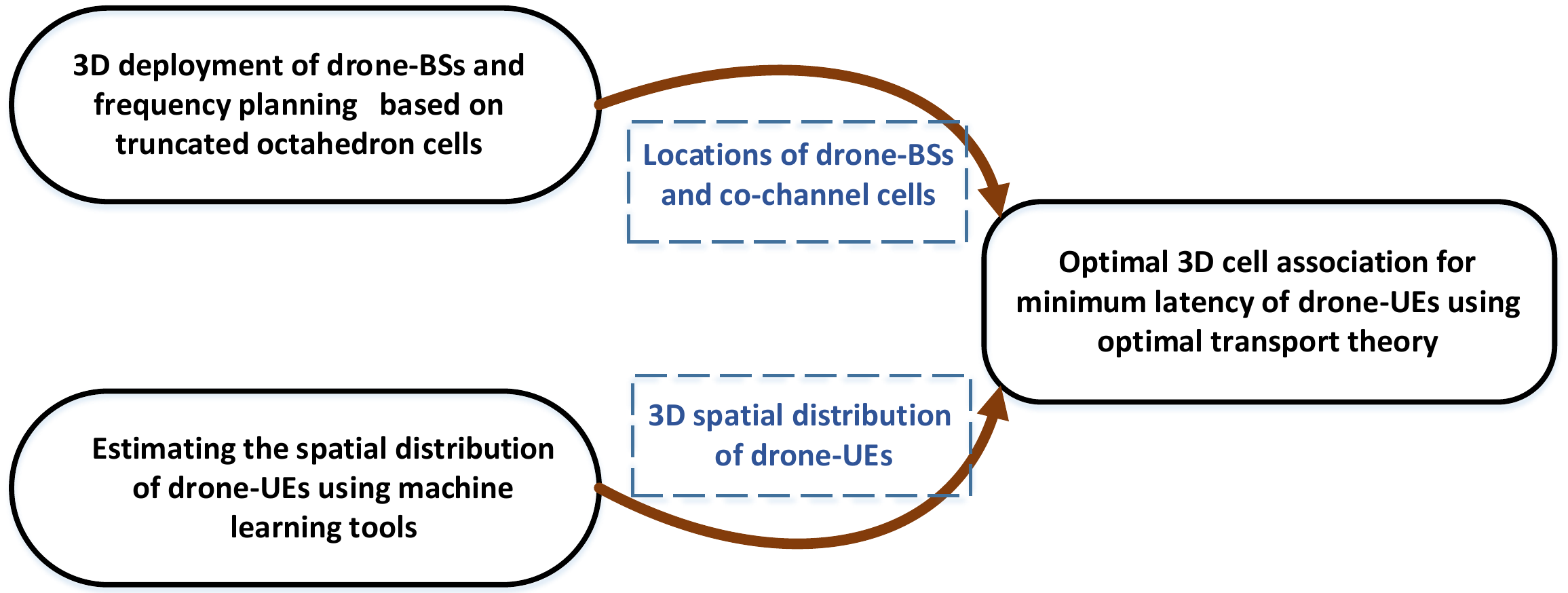}
		\vspace{-0.4cm}
		\caption{ Our proposed framework for designing the 3D cellular network. \vspace{-.82cm}}
		\label{Diagram}
	\end{center}	\vspace{-0.1cm}
\end{figure}

 \section{Three-dimensional Network Planning of Drone-BSs: A Truncated Octahedron Structure} \label{Sec}
To perform 3D network planning, we propose a framework for the 3D deployment of drone-BSs and associated frequency planning. In particular, we use the notion of truncated octahedron structure to determine the drone-BSs' locations as well as the feasible integer frequency factors that allow finding co-channel interfering drone-BSs. 
 
 In traditional ground cellular networks, hexagonal cell shapes are used while deploying base stations. This is due to the fact that, a 2D space can be fully covered (i.e., without any gaps) by non-overlapping hexagons. While triangle and square cells are also able to tessellate a 2D area, the hexagonal cell is preferred in cellular wireless network planning due to the following reasons. First, the hexagonal shape has a larger area than the square and the triangle, hence less cells will be needed to cover a geographical area. Second,   the hexagonal cell reasonably  approximates the circular radiation pattern of an omni-directional antenna base station. 
 
 Inspired by 2D cellular networks, we propose a framework for the deployment of a 3D cellular network. In three dimensions, the regular polyhedron geometric shapes that can tessellate the space  (i.e., fill the 3D space entirely) include cube, hexagonal prism, rhombic dodecahedron, and truncated octahedron \cite{alam2006coverage}. Among these 3D shapes, the truncated octahedron is the closest approximation of a sphere. Moreover, the number of polyhedron required for completely covering a 3D space is minimized by adopting the truncated octahedron \cite{alam2006coverage}. Therefore, in our model, we use the truncated octahedron structure for deploying the drone-BSs.

The truncated octahedron is a polyhedron in three dimensions with regular polygons faces. As we can see from Fig.\,\ref{Octahedron}, the truncated octahedron has 14 faces with 8 regular hexagonal and 6 square, 24 vertices, and 36 edges \cite{coxeter1973regular}. The key feature of the truncated octahedron is that it can tessellate the three-dimensional Euclidean space. In other words, the 3D space can be completely filled with multiple copies of the truncated octahedron without any overlap. We exploit this feature of the truncated octahedron in our 3D cellular network deployment with drone-BSs. 

%

 \begin{figure}[!t]
 	\begin{center}
 		\vspace{-0.1cm}
 		\includegraphics[width=4.2cm]{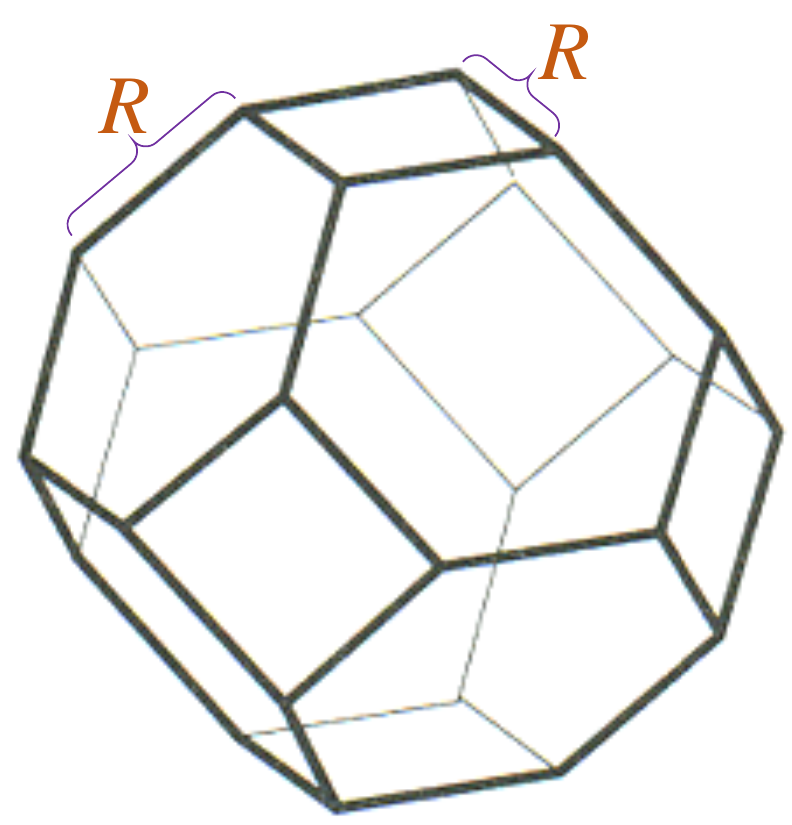}
 		\vspace{-0.4cm}
 		\caption{ Truncated octahedron in 3D. \vspace{-.82cm}}
 		\label{Octahedron}
 	\end{center}	\vspace{-0.1cm}
 \end{figure}

\begin{figure}[!t]
	\begin{center}
		\vspace{-0.1cm}
		\includegraphics[width=6.5cm]{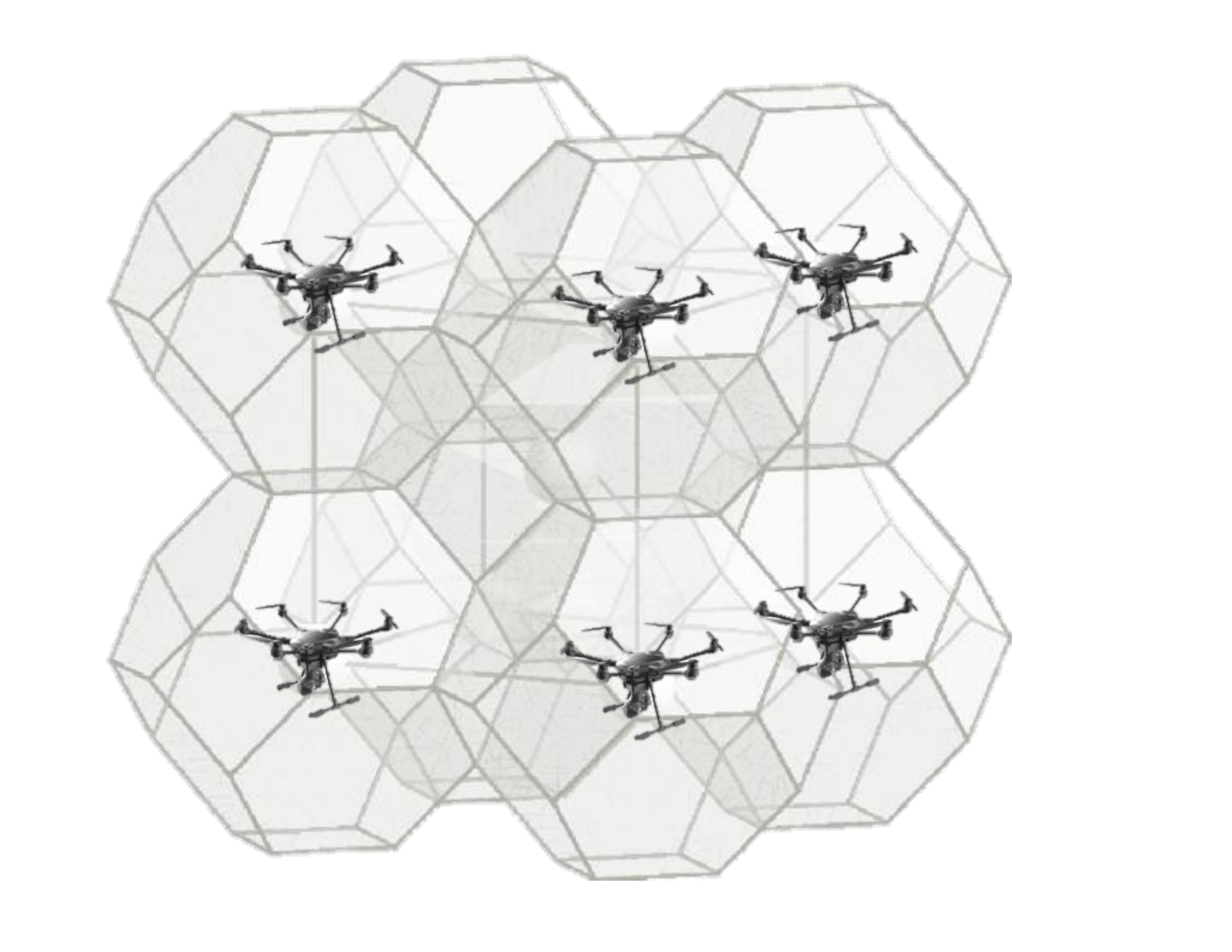}
		\vspace{-0.4cm}
		\caption{ Deployment of drone-BSs based on truncated octahedron cells. \vspace{-.82cm}}
		\label{3DCells}
	\end{center}	\vspace{-0.1cm}
\end{figure}

The deployment of drone-BSs needs to be done such that the entire desired space is covered. To this end, we first completely fill the given space with an arrangement of multiple truncated octahedron cells. Then,  we place each drone-BS at the center of each truncated octahedron, as shown in Fig.\,\ref{3DCells} as an illustrative example. Our proposed deployment approach can ensure full coverage for a given 3D space and is also easy to implement and tractable. Moreover, our approach facilitates frequency planning in 3D cellular networks by deriving analytical expressions for the feasible integer reuse factors. Next, we determine the locations of drone-BSs based on the proposed truncated octahedron cell structure.

 \begin{theorem} \label{T1} \normalfont
 The three-dimensional locations of  drone-BSs in the proposed 3D cellular network are given by:
 \begin{equation} \label{3DLocation}
\boldsymbol{P}_{\{a,b,c\}}=\big[x_o,y_o,z_o\big]+\sqrt{2}R\Big[a+b-c,-a+b+c,a-b+c\Big],
 \end{equation}	
 where $a$, $b$, $c$ are integers chosen from set $\{...,-2,-1,0,1,2,...\}$, and $R$ is the edge length of the considered truncated octahedrons. $[x_o,y_o,z_o]$ is the Cartesian coordinates of a given reference location (e.g., center of a specified space).
 \end{theorem}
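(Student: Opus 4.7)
The plan is to leverage the fact, already stated in the preceding paragraphs, that truncated octahedra tile $\mathbb{R}^3$: this immediately implies that the set of cell centers forms a discrete translation lattice, and the content of the theorem is just an explicit description of this lattice in terms of three integer parameters and the edge length $R$. My three-step strategy is (i) locate the face centers of a single reference truncated octahedron of edge length $R$ centered at $(x_o,y_o,z_o)$, (ii) use central symmetry of each cell together with the fact that face-adjacent cells share an entire face to read off the nearest-neighbor center-to-center displacement vectors as twice the center-to-face vectors, and (iii) select three linearly independent such displacements as a lattice basis and verify that their $\mathbb{Z}$-span is the full cell-center lattice.

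For step (i), I would work with the canonical truncated octahedron whose $24$ vertices are the permutations of $(0,\pm 1,\pm 2)$, which has edge length $\sqrt{2}$; a direct inspection of its fourteen faces shows that its six square-face centers are $(\pm 2,0,0)$, $(0,\pm 2,0)$, $(0,0,\pm 2)$ and its eight hexagonal-face centers are $(\pm 1,\pm 1,\pm 1)$. Rescaling by $R/\sqrt{2}$ and then doubling (step (ii)) yields the $14$ nearest-neighbor displacement vectors: $\pm 2\sqrt{2}R\,\hat{e}_i$ for $i=1,2,3$ (square-face neighbors) and $\sqrt{2}R(\pm 1,\pm 1,\pm 1)$ (hexagonal-face neighbors). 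Choosing
\begin{equation*}
\vec{v}_1=\sqrt{2}R(1,-1,1),\qquad \vec{v}_2=\sqrt{2}R(1,1,-1),\qquad \vec{v}_3=\sqrt{2}R(-1,1,1)
\end{equation*}
as a basis, the integer combination $a\vec{v}_1+b\vec{v}_2+c\vec{v}_3$ equals exactly $\sqrt{2}R(a+b-c,\,-a+b+c,\,a-b+c)$, which is precisely the displacement appearing in (\ref{3DLocation}); hence every triple $(a,b,c)\in\mathbb{Z}^3$ certainly indexes a cell center.

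The only nontrivial step, and the main obstacle in my plan, is the converse inclusion in step (iii): showing that $\{\vec{v}_1,\vec{v}_2,\vec{v}_3\}$ generates the \emph{entire} cell-center lattice of the tessellation, and not merely a proper sublattice. To settle this I would express every remaining face-adjacent displacement as an integer combination of the $\vec{v}_i$; the square-face displacements are recovered as $\vec{v}_1+\vec{v}_2=2\sqrt{2}R\,\hat{e}_x$, $\vec{v}_2+\vec{v}_3=2\sqrt{2}R\,\hat{e}_y$, $\vec{v}_1+\vec{v}_3=2\sqrt{2}R\,\hat{e}_z$ (and their negatives), while the two hexagonal directions not already among $\pm\vec{v}_1,\pm\vec{v}_2,\pm\vec{v}_3$ are recovered as $\pm(\vec{v}_1+\vec{v}_2+\vec{v}_3)=\pm\sqrt{2}R(1,1,1)$. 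Since the $14$ face-adjacent displacements already generate the translation lattice of any face-to-face tiling by a centrally symmetric polyhedron, this closes the argument and establishes (\ref{3DLocation}) as a complete parameterization of the drone-BS locations.
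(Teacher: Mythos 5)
Your proof is correct and arrives at the same parameterization the paper uses: both arguments take three center-to-center displacement vectors through hexagonal faces, of the form $\sqrt{2}R(\pm 1,\pm 1,\pm 1)$, as a lattice basis and write every drone-BS position as an integer combination of them. The routes to those vectors differ in detail: the paper computes the basis from the inradius $R\sqrt{6}/2$ to a hexagonal face and the dihedral angle $\cos^{-1}(-1/\sqrt{3})$, whereas you read them off from the canonical truncated octahedron with vertices at the permutations of $(0,\pm 1,\pm 2)$ and central symmetry, which is arguably cleaner and less error-prone. More importantly, you explicitly close the step the paper dismisses with ``it can be easily verified,'' namely that the $\mathbb{Z}$-span of $\vec{v}_1,\vec{v}_2,\vec{v}_3$ is the \emph{entire} body-centered lattice of cell centers and not a proper sublattice: your verification that all $14$ face-adjacent displacements (the six square-face ones $\pm 2\sqrt{2}R\,\hat{e}_i$ and the eight hexagonal ones $\sqrt{2}R(\pm 1,\pm 1,\pm 1)$) lie in that span, combined with connectivity of the face-to-face tiling, is exactly the missing completeness argument. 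So your write-up is a strictly more rigorous version of the paper's proof rather than a different one.
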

\begin{proof}
For the deployment of drone-BSs, we first create a  3D lattice of truncated octahedrons and then, place each drone-BS at the center of each truncated octahedron. Hence, to determine the locations of drone-BSs, we need to find the center of truncated octahedrons.

 \begin{figure}[!t]
 	\begin{center}
 		\vspace{-0.1cm}
 		\includegraphics[width=6cm]{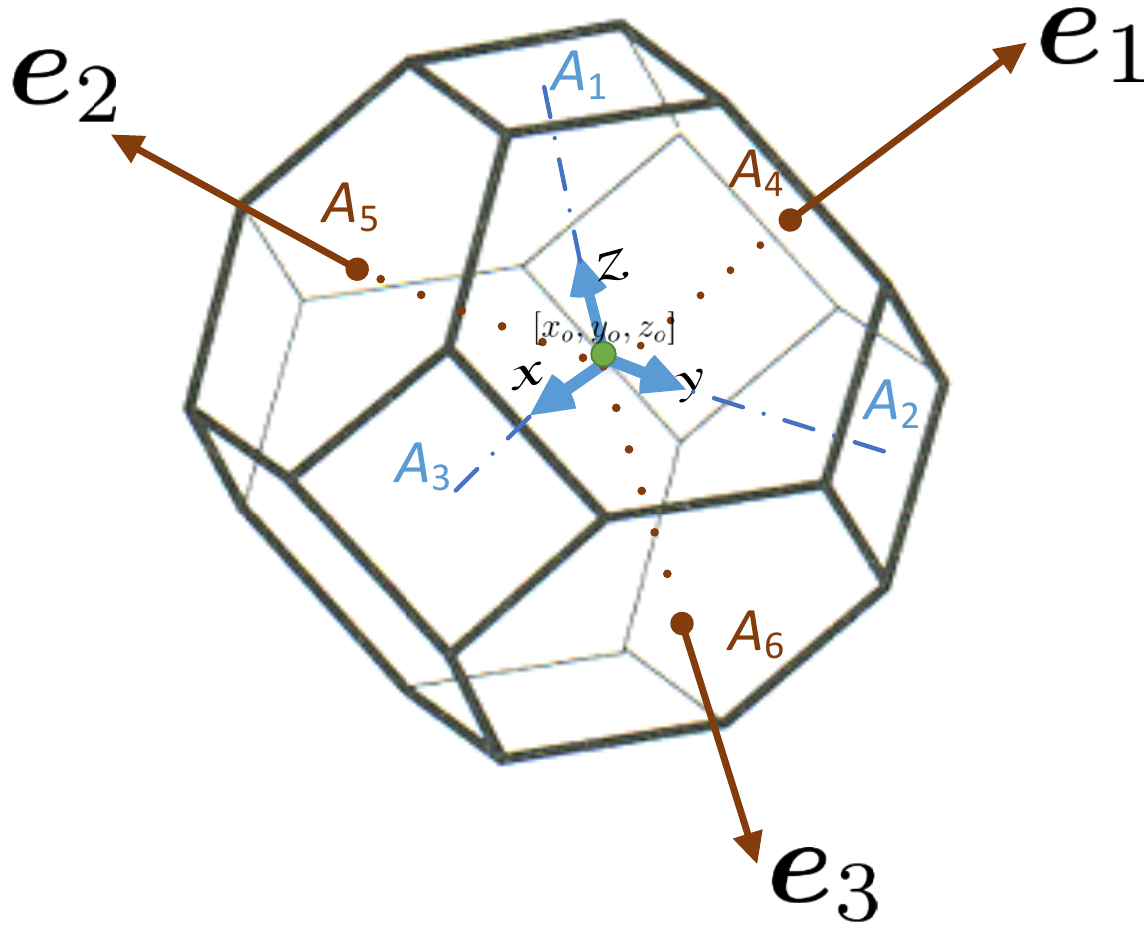}
 		\vspace{-0.4cm}
 		\caption{ Coordinate systems in drone-BSs deployment. \vspace{-.82cm}}
 		\label{Theorem1}
 	\end{center}	\vspace{-0.1cm}
 \end{figure} 

Let $[x_o,y_o,z_o]$ be the center of the first truncated octahedron in  Cartesian coordinates with the $x$, $y$, and $z$ directions being perpendicular to square faces $A_3$, $A_2$, and $A_1$ as shown in Fig.\, \ref{Theorem1}. We find a new coordinate system whose integer coordinates are the center of the truncated octahedrons. By moving, in integer value steps, along the axes of this coordinate system, we can reach the center of the truncated octahedrons. We consider a coordinate system whose axes ($\boldsymbol{e}_1$, $\boldsymbol{e}_2$, $\boldsymbol{e}_3$) are vertically outward the hexagonal faces, $A_4$, $A_5$, and $A_6$. Now, we find the Euclidean length of each unit axis of this coordinate system. The distance between the center of the truncated octahedron (e.g., $[x_o,y_o,z_o]$) and each hexagonal face is $R\sqrt{6}/2$ \cite{coxeter1973regular}. Therefore, the distance from $[x_o,y_o,z_o]$ to the center of an adjacent truncated octahedron connecting to face $A_4$ is $R\sqrt{6}$. As a result, each unit on axis $\boldsymbol{e}_1$ (and also $\boldsymbol{e}_2$ and $\boldsymbol{e}_3$) must be $2R\sqrt{6}$. It can be easily verified that the centers of the truncated octahedrons in the 3D lattice are the integer coordinates of the ($\boldsymbol{e}_1$,$\boldsymbol{e}_2$,$\boldsymbol{e}_3$) coordinate system. Hence, the 3D location of each drone-BS can be represented by a triple $(a,b,c)$ with $a$, $b$, and $c$ being integers. The position of a drone-BS obtained by $\{a,b,c\}$ is given by:
\begin{equation}\label{LOC}
\boldsymbol{P}_{\{a,b,c\}}=a \boldsymbol{e}_1+ b \boldsymbol{e}_2+ c \boldsymbol{e}_3.
\end{equation}  
Now, we need to represent $\boldsymbol{P}_{\{a,b,c\}}$ using Cartesian coordinates. To this end, we find the projection of $\boldsymbol{e}_1$, $\boldsymbol{e}_2$, and $\boldsymbol{e}_3$ on the $x$, $y$, and $z$ axes. With some geometric calculations and using the fact that the dihedral angle (i.e., angle between two intersecting planes) between the adjacent square face and hexagonal face is $\cos^{-1}(\frac{-1}{\sqrt{3}})$ \cite{coxeter1973regular},  we obtain: 
\begin{equation} \label{Stage1}
\begin{cases}
\boldsymbol{e}_1=R\sqrt{6}\big(\frac{-1}{\sqrt{3}}\boldsymbol{x}+\frac{1}{\sqrt{3}}\boldsymbol{y}+\frac{1}{\sqrt{3}}\boldsymbol{z}\big),\\
\boldsymbol{e}_2=R\sqrt{6}\big(\frac{1}{\sqrt{3}}\boldsymbol{x}+\frac{-1}{\sqrt{3}}\boldsymbol{y}+\frac{1}{\sqrt{3}}\boldsymbol{z}\big),\\
\boldsymbol{e}_3=R\sqrt{6}\big(\frac{1}{\sqrt{3}}\boldsymbol{x}+\frac{1}{\sqrt{3}}\boldsymbol{y}+\frac{-1}{\sqrt{3}}\boldsymbol{z}\big).
\end{cases}
\end{equation}
Finally, using (\ref{LOC}) and (\ref{Stage1}), the 3D locations of drone-BSs in Cartesian coordinates, with respect to the reference position $[x_o,y_o,z_o]$ are given by:
\begin{equation}\label{POS}
\boldsymbol{P}_{\{a,b,c\}}=\big[x_o,y_o,z_o\big]+\sqrt{2}R\Big[a+b-c,-a+b+c,a-b+c\Big],
\end{equation} 
which proves the theorem.
\end{proof}
Using Theorem \ref{T1}, we can find the 3D coordinates of drone-BSs which are deployed at the centers of truncated octahedrons. Moreover, as shown next, Theorem \ref{T1} allows us to determine the frequency reuse factor as well as interfering drone-BSs in the proposed 3D cellular network. 	
	
\begin{theorem} \label{Th2}\normalfont
	In the considered 3D cellular network, any feasible integer frequency reuse factors can be determined by solving the following equations:  
\begin{equation} \label{Theorem2}
\begin{cases}
\displaystyle q=\sqrt{\frac{\big[3(n_1^2+n_2^2+n_3^2)-2(n_1n_2+n_1n_3+n_2n_3)\big]^3}{27}},\vspace{0.2cm}\\
\displaystyle q=\sqrt{\frac{\big[3(m_1^2+m_2^2+m_3^2)-2(m_1m_2+m_1m_3+m_2m_3)\big]^3}{64}},
\end{cases}
\end{equation}
where $q$ is the frequency reuse factor which is a positive integer. $n_1$, $n_2$, $n_3$, $m_1$, $m_2$, and $m_3$ are integers that satisfy (\ref{Theorem2}) by generating feasible frequency reuse factors. 
\end{theorem}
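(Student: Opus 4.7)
The plan is to treat frequency reuse as the problem of choosing a sublattice of co-channel drone-BSs inside the truncated-octahedron lattice from Theorem \ref{T1}, with the extra requirement that this sublattice is itself the center-lattice of a truncated-octahedron tessellation, which I will call the \emph{cluster lattice}. Each cluster must contain exactly $q$ cells (one per frequency), so matching volumes forces the cluster to be a truncated octahedron of edge length $R' = R q^{1/3}$. Feasible reuse factors are then precisely those integers $q$ for which there exist integer triples in the lattice of Theorem \ref{T1} whose mutual distance equals a face-adjacency distance of the cluster lattice.

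The first step is to compute the squared distance between a reference drone-BS at $[x_o,y_o,z_o]$ and another one at $\boldsymbol{P}_{\{n_1,n_2,n_3\}}$. Substituting (\ref{3DLocation}) and expanding the three squared coordinate differences, the cross terms collapse and I expect to obtain
\begin{equation*}
\bigl\|\boldsymbol{P}_{\{n_1,n_2,n_3\}}-\boldsymbol{P}_{\{0,0,0\}}\bigr\|^2 = 2R^2\bigl[3(n_1^2+n_2^2+n_3^2)-2(n_1 n_2 + n_1 n_3 + n_2 n_3)\bigr],
\end{equation*}
which is exactly the bracketed polynomial $S$ appearing in (\ref{Theorem2}). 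Hence the co-channel distance is $D = R\sqrt{2S}$. The second step is to recall that in a truncated-octahedron tessellation each cell has only two kinds of face-adjacent neighbors: hexagonal-face neighbors at center-to-center distance $\sqrt{6}R$ and square-face neighbors at distance $2\sqrt{2}R$, as was already used in the proof of Theorem \ref{T1}. Rescaling by $q^{1/3}$, the two possible nearest co-channel distances in the cluster lattice are $\sqrt{6}Rq^{1/3}$ and $2\sqrt{2}Rq^{1/3}$.

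The third step is to equate $D$ with each of these two candidate distances and solve for $q$. Matching the hexagonal-face case gives $2R^2 S = 6R^2 q^{2/3}$, hence $S = 3q^{2/3}$, i.e.\ $q = \sqrt{S^3/27}$ with $S = S(n_1,n_2,n_3)$, which is the first line of (\ref{Theorem2}). Matching the square-face case gives $2R^2 S = 8R^2 q^{2/3}$, hence $S = 4q^{2/3}$, i.e.\ $q = \sqrt{S^3/64}$ with $S = S(m_1,m_2,m_3)$ after renaming the integer triple, producing the second line of (\ref{Theorem2}). Integrality of $q$ is then enforced externally, by restricting to those integer triples that make the right-hand side a positive integer.

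The part I expect to require the most care is not the algebra---steps one and three are routine expansions---but the geometric step of justifying that the co-channel sublattice must be geometrically similar to the original truncated-octahedron lattice, and therefore that the only admissible co-channel separations are the two face-adjacency distances of the rescaled cluster. This relies on two facts that I would invoke explicitly: that the truncated octahedron is one of only a handful of regular space-filling polyhedra, and that uniform frequency reuse requires the cluster to tile space with the same symmetry as the underlying cell lattice, so no third class of cluster adjacency can appear.
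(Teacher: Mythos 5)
Your proposal is correct and follows essentially the same route as the paper: you use the distance formula implied by Theorem \ref{T1} to get $D^2 = 2R^2\big[3\sum n_i^2 - 2\sum_{i<j} n_i n_j\big]$, model each cluster of $q$ cells as a big truncated octahedron of edge $Rq^{1/3}$, and equate $D$ with the hexagonal-face ($\sqrt{6}Rq^{1/3}$) and square-face ($2\sqrt{2}Rq^{1/3}$) adjacency distances to obtain the two lines of (\ref{Theorem2}). The geometric assumption you flag as needing care (that the cluster lattice is a rescaled copy of the cell lattice with only two adjacency classes) is exactly the step the paper also takes, and takes without further justification.
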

\begin{proof}
We consider a truncated octahedron cell with 14 faces, as a reference cell. In this case, the number of first tier co-channel interfering cells is 14. Since the distance between centers of the reference cell and its adjacent cell is varies depending on the connecting face (i.e., hexagonal or square face), we consider two different co-channel distances (i.e., reuse distances). Let $D_u$ and $D_l$ be two different reuse distances to different interfering cells.

Assume that the center of a co-channel cell at a distance $D_l$ is located at a positive integer coordinate $(n_1,n_2,n_3)$ in our defined coordinate system ($\boldsymbol{e}_1, \boldsymbol{e}_2,\boldsymbol{e}_3$). Now, using (\ref{3DLocation}) in Theorem \ref{T1} leads to:
\begin{align} \label{Dl}
D_l&=\sqrt{2}R\sqrt{(n_1+n_2-n_3)^2+(-n_1+n_2+n_3)^2+(n_1-n_2+n_3)^2}\nonumber\\
&\overset{(a)}{=}R\sqrt{6(n_1^2+n_2^2+n_3^2)-4(n_1n_2+n_1n_3+n_2n_3)},
\end{align}
where in $(a)$ we used algebraic identities. 

 \begin{figure}[!t]
 	\begin{center}
 		\vspace{-0.1cm}
 		\includegraphics[width=6.7cm]{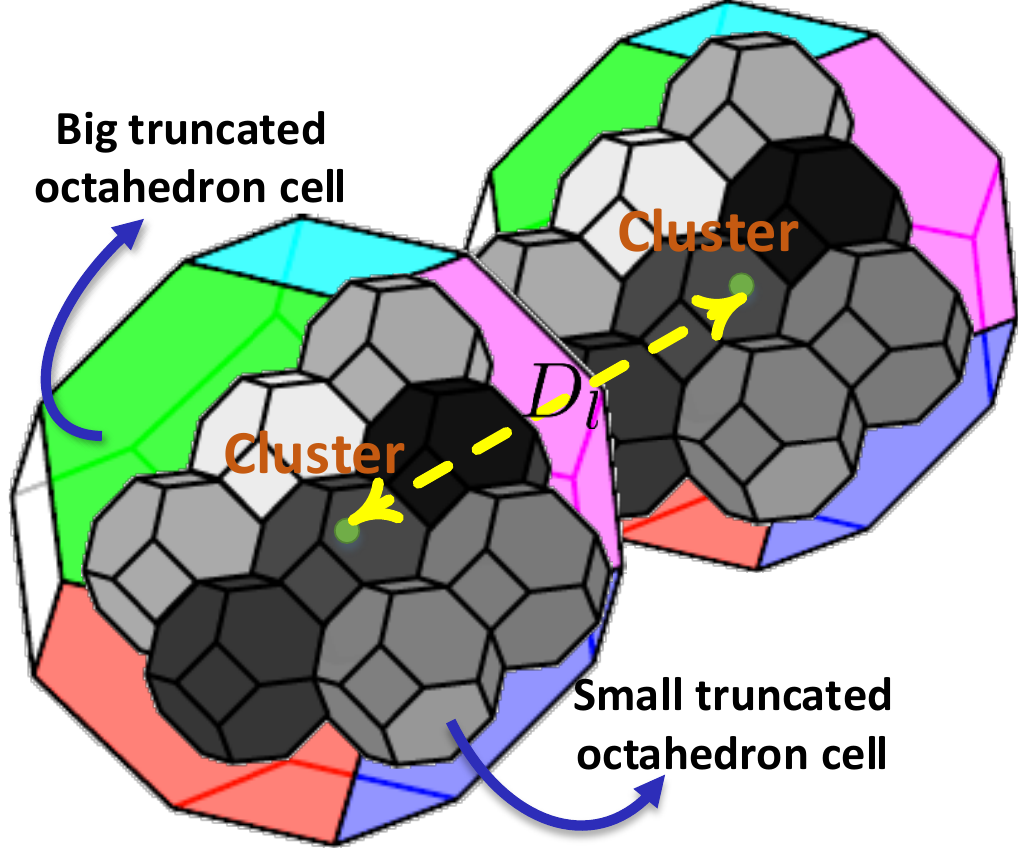}
 		\vspace{-0.4cm}
 		\caption{ Clusters of truncated octahedron cells. \vspace{-.82cm}}
 		\label{T2}
 	\end{center}	\vspace{-0.2cm}
 \end{figure}

Similar to 2D cellular networks, the frequency reuse factor is equal to the number of non-interfering cells within a cluster of cells. Hence, cells within each cluster will use different frequency bands. To find the frequency reuse factor in the 3D network, we compute the number of non-interfering cells that create one 3D cluster. Clearly, for the reference cell, a co-channel interfering cell is located in the adjacent cluster. Here, a given space is covered by multiple clusters of truncated octahedron cells. In addition, any space can be fully covered by a number of arbitrary-sized truncated octahedrons. Therefore, we can replace each cluster of cells with a big truncated octahedron cell (as illustrated in Fig.\,\ref{T2}) of the same volume. In this case, the centers of two co-channel cells are also the centers of two adjacent big truncated octahedron cells, as shown in Fig.\,\ref{T2}. These two big cells can be connected to each other either from  their hexagonal face (reuse distance $D_l$) or square face (reuse distance $D_u$). For the hexagonal case, the edge length of the big cells, $R_B$, is related to the reuse distance by:
\begin{equation}
R_B=\frac{D_l}{\sqrt{6}}.
\end{equation} 

The number of cells per cluster is equivalent to the volume ratio of the big cell (i.e., cluster) to one truncated octahedron cell:
\begin{equation}\label{Nnumber}
q=\frac{V_B}{V_S}\overset{(a)}{=}\frac{8\sqrt{2}R_B^3}{8\sqrt{2}R^3}=\big(\frac{D_l}{\sqrt{6}R}\big)^3\overset{(b)}{=}\sqrt{\frac{\big[3(n_1^2+n_2^2+n_3^2)-2(n_1n_2+n_1n_3+n_2n_3)\big]^3}{27}},
\end{equation}
where $V_B$ and $V_S$ are, respectively, the volumes of one cluster (e.g., big truncated octahedron) and a truncated octahedron cell.  $(a)$ follows from the volume of the truncated octahedron as a function of its edge length \cite{coxeter1973regular}, and $(b)$ follows from (\ref{Dl}).

For two big cells connecting from their square faces, we have:
\begin{align}
&D_u=R\sqrt{6(m_1^2+m_2^2+m_3^2)-4(m_1m_2+n_1m_3+m_2m_3)},\\
&R_B=\frac{D_u}{2\sqrt{2}}.
\end{align}

Then, the integer frequency reuse will be:
\begin{equation} \label{Mnumber}
q=\frac{V_B}{V_S}=\big(\frac{D_u}{2\sqrt{2}R}\big)^3{=}\sqrt{\frac{\big[3(m_1^2+m_2^2+m_3^2)-2(m_1m_2+m_1m_3+m_2m_3)\big]^3}{64}}. 
\end{equation}
Since the number of cells per cluster represents the frequency reuse factor is a positive integer, $(n_1,n_2,n_3)$ and $(m_1,m_2,m_3)$ must generate an integer in (\ref{Nnumber}) and (\ref{Mnumber}).
\end{proof}	
 Theorem \ref{Th2} can be used to determine the feasible integer frequency reuse factors in the considered 3D network. In addition, while performing frequency planning, the 3D locations of co-channel cells (i.e., drone-BSs) can be identified. As an example, the frequency reuse of one is obtained by considering $(n_1,n_2,n_3)=(1,0,0)$, and $(m_1,m_2,m_3)=(1,1,0)$. In fact, $q=1$ corresponds to a worst-case scenario in which all the drone-BSs will interfere with each other. In this case, the locations of co-channel interfering drone-BSs corresponding to a reference cell with an edge length $R$ and center $(0,0,0)$ are the columns of the following matrix:
\begin{equation}
\boldsymbol{H}=\sqrt{2} R\begin{pmatrix}  
	1 & 1 & -1 & 1 & 1 & -1 & -1 & -1 & 1 & -1 & 2 & 0 & 0 & -2 & 0 & 0\\
	-1 & 1 & 1& 1& 1& -1&  1& -1& -1& -1& 0& 2& 0&  0& -2& 0\\
	1 & -1 & 1& -1& 1& -1& -1& 1& -1& 1&  0& 0& 2&  0& 0& -2
\end{pmatrix},
\end{equation}
where each column of matrix $\boldsymbol{H}$ represents a 3D location of one co-channel drone-BS.

In summary, our approach for 3D deployment and frequency planning of drone-BSs can proceed as follows. We deploy the first drone-BS as a reference cell in a specified space of interest. Then, using our truncated octahedron model with parameter $R$, we use Theorem \ref{T1} to find the locations of other drone-BSs with respect to the reference cell. In this case, each drone-BS is located at the center of one truncated octahedron cell.  This results in a truncated octahedron tessellation that covers a given space without any gap or overlap. For frequency planning, we use Theorem \ref{Th2} to find the feasible frequency reuse factors. Then, for any given frequency reuse factor, we determine the sets of co-channel cells in the network. This, in turn, enables us to compute the SINR and transmission latency (which is used in our optimization problem in (\ref{OPT1})) at any location in the 3D space.

 \begin{figure}[!t]
 	\begin{center}
 		\includegraphics[width=9cm]{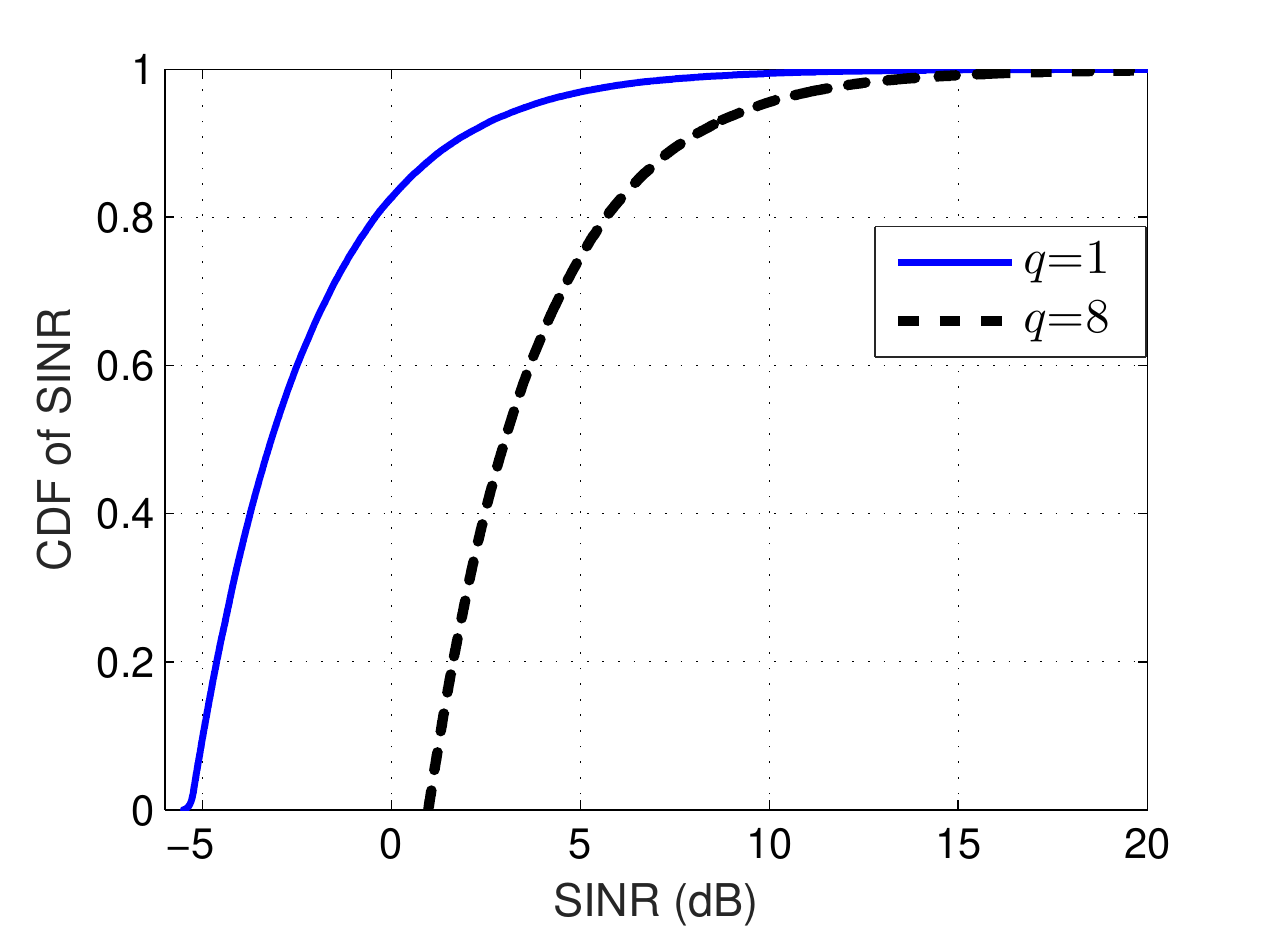}
 		\vspace{-0.4cm}
 		\caption{CDF of drone-UEs' SINR in a 3D cell for two different frequency reuse factors. \vspace{-.82cm}}
 		\label{fig:CDF}
 	\end{center}\vspace{-0.2cm}	
 \end{figure}

To show the impact of the frequency reuse factor on the SINR of drone-UEs, in Fig. \ref{fig:CDF}, we plot the cumulative distribution function (CDF) of drone-UEs' SINR in a 3D cell with a $R=400$\,m. As we expect, drone-UEs experience higher SINR for a higher frequency reuse factor (i.e., $q$). However, a case with a frequency reuse factor 8 requires eight time more bandwidth compared to the case of frequency reuse 1.

\section{Estimation of the Spatial Distribution of Drone-UEs} \label{sec:EstDist}

Since drone-UEs cannot continuously report their locations due to excessive overhead costs, we need to design a machine learning based mechanism for estimating the locations of drone-UEs using sparse information. Therefore, we assume that each drone-UE is able to report its location at each $T$ seconds. Then, using that, we estimate the spatial distribution of drone-UEs which remains valid for the next $T$ seconds. 
We should note that, during $T$ seconds, the location of each drone-UE is changing due to its mobility. However, the distribution of drone-UEs is fixed so that we can use our estimation for the period of $T$ seconds. To this end, we develop a nonparametric model for $f(x,y,z)$ using a kernel density estimation (KDE) \cite{bishop2007pattern}. In case of parametric density estimation methods, if one uses a poor assumption for the density model, it results in a poor estimation performance. However, nonparametric methods are not sensitive to such poor assumptions.

 The distribution of drone-UEs changes with time. Nevertheless, since we assume that this distribution is fixed within an interval of $T$ seconds, we sample the location of each drone-UE every $T$ seconds, and use it to estimate $f(x,y,z)$. This reduces overhead compared to the case in which the system knows the location of drone-UEs at every time instant. We consider some small regions $\mathcal{R}$ where each drone lies in with probability $p$. Hence, the number of drone-UEs in this region $K$ follows a binomial distribution, i.e.,
 \begin{equation}
 \text{Pr}(K)=\frac{L!}{(L-K)!K!}p^K(1-p)^{L-K},
 \end{equation}
 For a binomial distribution, we know that the mean is $\mathds{E}(\frac{K}{L})=p$. Thus, we can write:
 \begin{equation}\label{eq:ML:lim_infty}
 \lim_{L\to \infty} \frac{K}{L}=p.
 \end{equation}
 Therefore, for a large $L$, we can assume $K=Lp$. Since  $\mathcal{R}$ is a small region, we can assume that
 $f(x,y,z), \forall (x,y,z) \in \mathcal{R}$ is constant, and hence: 
 \begin{equation}\label{eq:ML:int_smallregion}
 p=\int_{\mathcal{R}} f(x,y,z) \textrm{d}x \textrm{d}y \textrm{d}z=f(x,y,z) \,\mathcal{V}_{\mathcal{R}},
 \end{equation}
 where $\mathcal{V}_{\mathcal{R}}$ is the volume of region $\mathcal{R}$. Combining (\ref{eq:ML:lim_infty}) and (\ref{eq:ML:int_smallregion}), we can write:
 \begin{equation}
 f(x,y,z)= \frac{K}{L \mathcal{V}_{\mathcal{R}}}.
 \end{equation}
 If we define a small region $\mathcal{R}$ as a cube:
 \begin{equation}\label{eq:ML:cube_def}
 \mathcal{C}(\frac{x}{h_x},\frac{y}{h_y},\frac{z}{h_z})=
 \begin{cases}
 1, &\max\{|\frac{x}{h_x}|,|\frac{y}{h_y}|,|\frac{z}{h_z}|\}\leq 1/2,\\
 0, &\text{otherwise},
 \end{cases}
 \end{equation}
 then, we can write the total number of users inside this cube as:
 \begin{equation}
 K=\sum_{i=1}^L \mathcal{C}\left(\frac{x-x_i}{h},\frac{y-y_i}{h},\frac{z-z_i}{h}\right)=L h_xh_yh_z f(x,y,z).
 \end{equation}
 Since the volume of the cube in (\ref{eq:ML:cube_def}) is $h_x\cdot h_y\cdot h_z$, we can write the density function  as:
 \begin{equation}\label{eq:ML:cube_sum}
 f(x,y,z)=\frac{1}{L}\sum_{i=1}^L \frac{1}{h_xh_yh_z}\mathcal{C}\left(\frac{x-x_i}{h_x},\frac{y-y_i}{h_y},\frac{z-z_i}{h_z}\right),
 \end{equation}
 which can be interpreted as $L$ cubes with the volume $h_x\cdot h_y\cdot h_z$ centered at each data point. Also, $h_x$, $h_y$, and $h_z$ are the widths of the kernel in dimensions $x$, $y$, and $z$, respectively. To remove the discontinuity of cubes in the space, we use Gaussian kernels \cite{kernel2002Elgammal}. If we approximate each cube in (\ref{eq:ML:cube_sum}) with a Gaussian kernel, we have:
 \begin{equation}\label{eq:ML:build_model}
 \hat{f}(x,y,z)=\frac{1}{L}\sum_{i=1}^L \frac{1}{\sqrt{(2\pi)^3h_xh_yh_z}}e^{-\left(\frac{(x-x_i)^2}{h_x}+\frac{(y-y_i)^2}{h_y}+\frac{(z-z_i)^2}{h_z}\right)}.
 \end{equation}
 $\hat{f}(x,y,z)$ is not equal to $f(x,y,z)$, for two reasons. First, $L$ is a finite number, and second, the Gaussian kernel is an approximation of the cube in (\ref{eq:ML:cube_def}). However, we will see that this estimation has small errors even when the value of $L$ is not large. 
 We assume that $x$, $y$, and $z$ are uncorrelated, and hence, all the off-diagonal elements of the covariance matrix are zero. Here, the parameters $h_x$, $h_y$, and $h_z$ have a major effect on the accuracy of the estimation and need to be estimated. 
 The criteria for accuracy of kernel density estimation is the mean integrated squared error (MISE)  and for our problem, it is given by:
 \begin{equation}
 e=\mathds{E}\bigg[ \int_{-\infty}^{\infty}\int_{-\infty}^{\infty}\int_{-\infty}^{\infty} \left(\hat{f}(x,y,z;h_x,h_y,h_z)-f(x,y,z)\right)^2 \textrm{d}x \textrm{d}y \textrm{d}z\bigg].
 \end{equation}
 Since the MISE is not a mathematically tractable expression except in special cases,  we have to use approximation methods for approximating it. To this end, we first write MISE as:
 \begin{equation}
 \mathds{E}\bigg[ \int_{-\infty}^{\infty}\int_{-\infty}^{\infty}\int_{-\infty}^{\infty} \hat{f}^2(x,y,z;h_x,h_y,h_z)+f^2(x,y,z)-2\hat{f}(x,y,z;h_x,h_y,h_z)f(x,y,z) \textrm{d}x \textrm{d}y \textrm{d}z\bigg],
 \end{equation}
 where $h_x$, $h_y$, and $h_z$ are solutions to the following minimization problem:
 \begin{equation}\label{eq:ML:argmin-h}
 [h_x,h_y,h_z]=\text{arg} \min \mathds{E}\bigg[ \int_{-\infty}^{\infty}\int_{-\infty}^{\infty}\int_{-\infty}^{\infty} \hat{f}^2(x,y,z)-2\hat{f}(x,y,z;h_x,h_y,h_z)f(x,y,z) \textrm{d}x \textrm{d}y \textrm{d}z\bigg],
 \end{equation}
 where $f^2(x,y,z)$ has been omitted since it is a constant in the minimization problem. We can approximate (\ref{eq:ML:argmin-h}) using leave-one-out cross-validation (LOOCV) methods. 
 To this end, we first build a model for $\hat f(x,y,z;h)$ using the locations of all drone-UEs except one \cite{zambom2012reviewKernel}. Then, we find the log-likelihood for the remaining drone-UEs' locations using the current model. We repeat this operation and take an average with $L$ log-likelihood values, i.e.,
 \begin{align}
 \mathcal{L}(h_x,h_y,h_z)&=\frac{1}{L}\sum_{j=1}^L \hat{f}_{-j}(X_j,Y_j,Z_j;h_x,h_y,h_z)\\
 &=\frac{1}{L}\sum_{\substack{i=1\\ i\neq j}}^L \frac{1}{\sqrt{(2\pi)^3h_xh_yh_z}}e^{-\left(\frac{(X_j-x_i)^2}{h_x}+\frac{(Y_j-y_i)^2}{h_y}+\frac{(Z_j-z_i)^2}{h_z}\right)}.
 \end{align}
 It can be shown \cite{Turlach_bandwidthselection,kernel1995multivariate} that:
 \begin{equation}
 \mathds{E} \bigg[\hat{f}(x,y,z;h_z,h_y,h_z)\bigg]=\mathcal{L}(h_x,h_y,h_z),
 \end{equation}
 and since
 \begin{equation}
 \mathds{E}\bigg[ \int_{-\infty}^{\infty}\int_{-\infty}^{\infty}\int_{-\infty}^{\infty} \hat{f}(x,y,z;h_x,h_y,h_z)f(x,y,z) \textrm{d}x \textrm{d}y \textrm{d}z\bigg]=\mathds{E} \bigg[\hat{f}(x,y,z;h_z,h_y,h_z)\bigg],
 \end{equation}
 we can find $h_x$, $h_y$, and $h_z$ by a cross-validation method  as:
 \begin{equation}
 h_x,h_y,h_z=\text{arg} \min \mathds{E}\bigg[ \int_{-\infty}^{\infty}\int_{-\infty}^{\infty}\int_{-\infty}^{\infty} \hat{f}^2(x,y,z)\textrm{d}x \textrm{d}y \textrm{d}z-2\mathcal{L}(h_x,h_y,h_z)\bigg].
 \end{equation}
 Hence, we can say that  $-\mathcal{L}(h_x,h_y,h_z)$ is a biased estimator of MISE. Therefore, it can predict the location of the minimum  MISE, and using that, we can find the optimal $h_x$, $h_y$, and $h_z$.
  Algorithm \ref{ML:Alg1} summarizes the estimation of $f(x,y,z)$ using location of drone-UEs in each $T$ seconds.
 \begin{algorithm}[t]
 	\caption{drone-UEs' distribution estimation algorithm} \label{KernalAlg}
 	\label{ML:Alg1}
 	\begin{footnotesize}
 	\begin{algorithmic}
 		\renewcommand{\algorithmicrequire}{\textbf{Input:}}
 		\renewcommand{\algorithmicensure}{\textbf{Output:}}
 		\Require drone-UEs location $(X_1,Y_1,Z_1)\cdots,(X_L,Y_L,Z_L)$
 		\Ensure  $\hat{f}(x,y,z)$
 		
 		\State \textbf{Initialize:} $\mathcal{H}\gets$ set of candidate for  $\{h_x,h_y,h_z\}$, $\mathcal{L}(h_{\text{best}})\gets \infty$

 		\For {${h_x,h_y,h_z} \in \mathcal{H}$}
 		\For {$j=1,\cdots,L$}
 		\State Build a model using (\ref{eq:ML:build_model}) with $X_i, i\in \{1,\cdots,L\}, i\neq j$
 		\State sum$\gets$ sum$+\frac{1}{2}\log h_x+\frac{1}{2}\log h_y+\frac{1}{2}\log h_z+\left(\frac{(X_j-x_i)^2}{h_x}+\frac{(Y_j-y_i)^2}{h_y}+\frac{(Z_j-z_i)^2}{h_z}\right)+\frac{3}{2}\log(2\pi)$
 		\EndFor
 		\State $\mathcal{L}(h_x,h_y,h_z) \gets \frac{1}{L}$sum
 		\If {$\mathcal{L}(h_x,h_y,h_z)\leq \mathcal{L}(h_{\text{best}})$}
 		\State $h_{\text{best}}\gets h_x,h_y,h_z$
 		\EndIf
 		\EndFor
 		\State $h_x,h_y,h_z \gets h_{\text{best}}$
 		\State return $\hat{f}(x,y,z;h_x,h_y,h_z)$ in (\ref{eq:ML:build_model}) as drone-UEs PDF
 		
 	\end{algorithmic}
 	\end{footnotesize}
 \end{algorithm}

 \begin{figure}[!t]
 	\begin{center}
 		\vspace{-0.1cm}
 		\includegraphics[width=8.5cm]{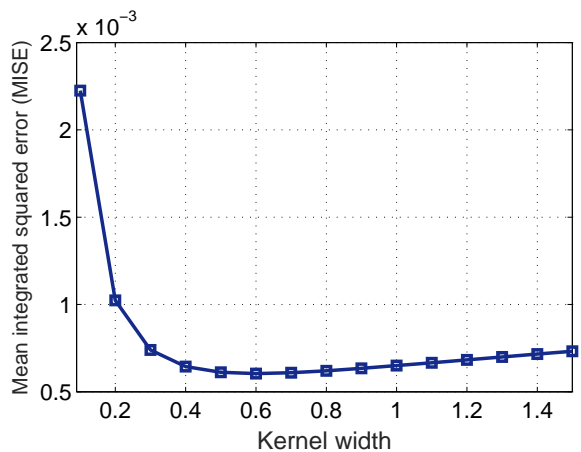}
 		\vspace{-0.4cm}
 			\caption{ٍMISE for symmetric kernel widths ($h_x=h_y=h_z=h$). } \vspace{-.4cm}
 		\label{fig:MISE}
 	\end{center}	\vspace{-0.3cm}
 \end{figure}


 Fig. \ref{fig:MISE} shows the MISE in case of symmetric kernels ($h_x=h_y=h_z=h$) for $30$ drone-UEs for different values of $h$. As we can see, our algorithm can potentially minimize the MISE to a value of $7.9554\times 10^{-04}$. Fig. \ref{fig:LOOCV} shows the negative log-likelihood function. As we can see from Figs. \ref{fig:MISE} and \ref{fig:LOOCV}, $-\mathcal{L}(h_x,h_y,h_z)$ is a biased estimator of MISE, and hence, we can use $-\mathcal{L}(h_x,h_y,h_z)$ to find the optimal $h_x,h_y,h_z$. Fig. \ref{fig:LOOCV} shows that, by means of LOOCV method, the MISE for our PDF estimation is $5.7221\times 10^{-04}$ which is close to the MISE lower bound that is $7.9554\times 10^{-04}$.
 
 In summary, our approach for estimation of drone-UE spatial distribution is as follows.  We collect the location of drone-UEs at each $T$ seconds. Then, we estimate the distribution of drone-UEs to use it for 3D cell association during the next $T$ seconds. We adopt an accuracy metric for our density estimation and use it to find width of the kernels. We showed that our approach is able to estimate the spatial distribution of drone-UEs with a near optimal accuracy.

\begin{figure}[!t]
	\begin{center}
		\vspace{-0.1cm}
		\includegraphics[width=8.5cm]{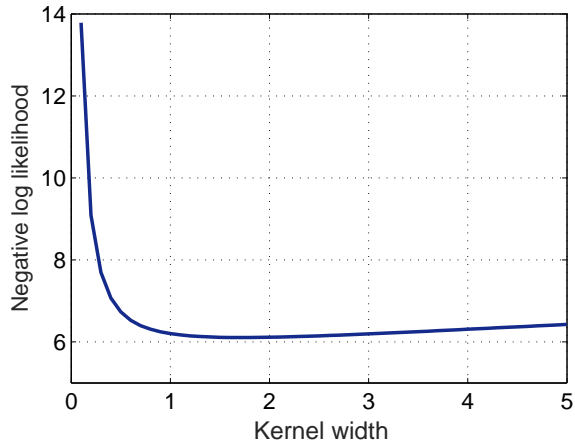}
		\vspace{-0.4cm}
	\caption{ٍLOOCV method for finding an optimal kernel width ($h)$. } \vspace{-.4cm}
		\label{fig:LOOCV}
	\end{center}	\vspace{-0.3cm}
\end{figure}

 \section{Optimal 3D Cell Association for Minimum Latency} 
 In Sections III and IV, we determined the locations of drone-BSs and the spatial distribution of drone-UEs. Here, we use this information (i.e., drone-BSs' locations and drone-UEs' distribution) to explicitly formulate our latency-optimal 3D cell association problem. 
   \begin{align} \label{OPT2}
   \mathop {\min }\limits_{{\mathcal{V}_1,...,\mathcal{V}_N}} &\mathlarger{\sum}\limits_{n = 1}^N \Bigg[ {\bigintssss_{{\mathcal{V}_n}}{\frac{\beta K_n}{B_n\log_2\big(1+\gamma_n(x,y,z)\big)}  \hat{f}(x,y,z)}} \textrm{d}x\textrm{d}y\textrm{d}z+\frac{\beta K_n}{C_n}+g_n(\beta K_n)\Bigg], \\[5pt]
   \textrm{s.t.}\,\,\,\, &K_n=L\int_{\mathcal{V}_n}{ \hat{f}(x,y,z) \textrm{d}x\textrm{d}y\textrm{d}z},\\
   &{\mathcal{V}_l} \cap {\mathcal{V}_m} = \emptyset ,\,\,\,\forall l \ne m \in \mathcal{N}, \label{VOL1}\\
   &\bigcup\limits_{n \in \mathcal{N}} {{\mathcal{V}_n}}  = \mathcal{V}, \label{VOL2}
   \end{align}
 where $\gamma_n(x,y,z)$ is the downlink SINR for a drone-UE located at $(x,y,z)$ which is served by drone-BS $n$. Considering a practical bounded path loss model \cite{liu2017effect} for air-to-air communications, the SINR can be given by:
\begin{align}
&\gamma_n(x,y,z)=\frac{\eta \textcolor{black}{\kappa_n(x,y,z)} P_n  [1+d_n(x,y,z)]^{-\alpha}}{\mathlarger{\sum}\limits_{u \in {\mathcal{I}_\textrm{int}}} {\eta \kappa_u(x,y,z){P_u}[1+d_u(x,y,z)]^{-\alpha}}+ N_oB_n},\\
&d_n(x,y,z)=\sqrt{(x-x_n)^2+(y-y_n)^2+(z-z_n)^2},\\
&d_u(x,y,z)=\sqrt{(x-x_u)^2+(y-y_u)^2+(z-z_u)^2},\,\, u\in \mathcal{I}_\textrm{int},
\end{align}
\textcolor{black}{where $\kappa_n(x,y,z)$ is a channel gain factor between a drone-UE, located at $(x,y,z)$, and drone-BS $n$. $\kappa_n(x,y,z)$ depends on the environment, and the locations of the drone-UE and drone-BS. $\kappa_n(x,y,z)=1$ corresponds to a LoS air-to-air communication, while $0<\kappa_n(x,y,z)<1$ can capture the impact of NLoS conditions.} $\alpha$ is the path loss exponent, $N_o$ is the noise power spectral density, $\eta$ is the path loss constant, and $(x_n,y_n,z_n)$ is the 3D location of drone-BS $n$. $d_n(x,y,z)$ and $d_u(x,y,z)$ are, respectively, the distance of drone-BSs $n$ and $u$ with a drone-UE located at $(x,y,z)$. Also, $\mathcal{I}_\textrm{int}$ is the set of co-channel interfering drone-BSs that operate over the same frequency band as drone-BS $n$.
 
 Solving (\ref{OPT2}) is challenging since  the optimization variables $\mathcal{V}_n$, $ \forall n \in \mathcal{N}$, are  continuous 3D association spaces which are mutually dependent. Furthermore, the fact that the size and shape of these 3D association spaces are unknown, exacerbates the complexity. In addition, the objective function in (\ref{OPT2}) does not have a closed-form expression thus making the problem intractable. Consequently, employing traditional optimization techniques (e.g., convex optimization) are not sufficient to  solve  (\ref{OPT2}). Here, we tackle our 3D space association by exploiting optimal transport theory. In particular, first, we prove the existence of an optimal solution to (\ref{OPT2}) and, then, we completely characterize the solution space. We note that, compared to our previous work in \cite{MozaffariFlightTime}, this work is different in terms of the system model, the 3D cell association optimization problem, as well as the solution.

 Optimal transport theory is a mathematical tool that is used to find an optimal
 mapping between two arbitrary probability measures \cite{villani2003}. More specifically, in a semi-discrete optimal
 transport problem, a continuous probability density function must be mapped to a discrete
 probability measure. In such a semi-discrete case, the optimal transport map will optimally
 partition the continuous distribution and assign each partition to one point in the discrete
 probability measure (which, in our case, is the discrete set of drone-BSs). 
 
 Our cell association problem can be modeled as a semi-discrete optimal transport problem in
 which the source measure (drone-UEs' distribution) is continuous while the destination (distribution
 of drone-BSs) is discrete. Then, the optimal 3D cell partitions are obtained by optimally mapping the
drone-UEs to drone-BSs.
 \begin{lemma} \normalfont
 	The optimization problem in (\ref{OPT2}) admits an optimal solution for any semi-continuous function $g_n(.)$, $n\in \mathcal{N}$.
 \end{lemma}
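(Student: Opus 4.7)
The plan is to invoke the direct method of the calculus of variations, recasting (\ref{OPT2}) as a semi-discrete optimal transport problem in the spirit of the suggested framework. I would represent each admissible partition $\{\mathcal{V}_n\}_{n=1}^N$ by a transport plan $\pi$ on the product space $\mathcal{V}\times\mathcal{N}$ whose first marginal equals $\hat{f}(x,y,z)\,dx\,dy\,dz$, setting $\pi(A\times\{n\})=\int_{A\cap\mathcal{V}_n}\hat{f}(x,y,z)\,dx\,dy\,dz$ for Borel $A\subseteq\mathcal{V}$. Under this identification the load $K_n(\pi)=L\,\pi(\mathcal{V}\times\{n\})$ is a weak-$*$ continuous linear functional of $\pi$, and the disjointness/covering constraints (\ref{VOL1})--(\ref{VOL2}) are absorbed into the marginal constraint.

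First I would verify that the admissible set $\Pi$ of such plans is nonempty (any trivial partition gives a feasible plan) and weak-$*$ compact. Compactness follows because $\mathcal{V}$ is bounded, $\mathcal{N}$ is finite, and every $\pi\in\Pi$ has unit total mass, so $\Pi$ is a tight, uniformly bounded family of Radon measures; Prokhorov's theorem together with closedness of the marginal constraint then yields weak-$*$ compactness.

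Second I would check lower semi-continuity of the objective $J(\pi)$. The backhaul term is linear in $K_n(\pi)$ and hence weak-$*$ continuous. The computational term $g_n(\beta K_n(\pi))$ is lower semi-continuous as the composition of a lower semi-continuous $g_n$ with a continuous $K_n$. The transmission contribution for drone-BS $n$ can be rewritten as
\begin{equation*}
\frac{L\beta}{B_n}\,\pi(\mathcal{V}\times\{n\})\,\int_{\mathcal{V}\times\mathcal{N}} \frac{\mathbb{1}_{\{n\}}(\nu)}{\log_2\bigl(1+\gamma_n(x,y,z)\bigr)}\,d\pi(x,y,z,\nu),
\end{equation*}
a product of two linear functionals. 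The integrand is continuous and bounded on the compact set $\mathcal{V}$ thanks to the bounded path-loss model $[1+d_n]^{-\alpha}$, so each factor is weak-$*$ continuous. Summing over $n$ and invoking Weierstrass's extreme-value theorem delivers an optimal $\pi^\ast\in\Pi$; disintegrating $\pi^\ast$ along the first marginal produces optimal partitions $\mathcal{V}_n^\ast$.

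The main obstacle I anticipate is the quadratic dependence on $\pi$ inside the transmission cost, since products of weakly convergent functionals need not be weakly continuous in infinite-dimensional measure spaces. The resolution here is that one of the two factors is the scalar $K_n(\pi)\in\mathbb{R}$, whose weak-$*$ convergence is genuine numerical convergence, so the product is handled by the elementary fact that a convergent real sequence times a bounded convergent functional converges to the product of the limits. A secondary, more technical concern is the passage from the optimal plan back to a bona fide partition, but since $J$ is affine in the partition indicators once the scalars $K_n$ are frozen, one can always choose an extremal minimizer concentrated on the graph of a measurable assignment map, yielding genuine partitions satisfying (\ref{VOL1})--(\ref{VOL2}) up to $\hat{f}$-null sets.
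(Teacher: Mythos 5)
Your proof is correct, but it follows a genuinely different route from the paper's. The paper decomposes the problem into two levels: it introduces the simplex $S$ of load vectors $\boldsymbol{K}=(K_1,\dots,K_N)$, observes that for each \emph{fixed} $\boldsymbol{K}$ the problem reduces to a classical semi-discrete Monge problem with prescribed marginals $K_n/L$ and a lower semi-continuous cost (into which the backhaul and computation terms are folded as constants), invokes the known existence theorem for such problems, and then appeals to compactness of $S$ to conclude existence over all admissible loads. You instead run the direct method in one shot on the Kantorovich relaxation: you encode partitions as plans on $\mathcal{V}\times\mathcal{N}$ with first marginal $\hat{f}$, establish weak-$*$ compactness of the feasible set via Prokhorov, and prove lower semi-continuity of the full objective, taking care of the only genuinely nonlinear dependence on $\pi$ --- the product $K_n(\pi)\cdot\int h_n\,d\pi$ in the transmission term --- by noting that one factor is a convergent scalar. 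What your approach buys is rigor at exactly the point the paper glosses over: concluding existence over $S$ from compactness alone implicitly requires lower semi-continuity of the value function $\boldsymbol{K}\mapsto V(\boldsymbol{K})$, which the paper never verifies, whereas your argument needs no such step. What the paper's approach buys is a structure (fixed marginals, cost $c(\boldsymbol{v},\boldsymbol{s}_n)$) that feeds directly into the variational characterization of the optimal partitions in Theorem 3. The one step you should make slightly more precise is the final passage from an optimal plan to a genuine partition: the clean statement is that, for frozen $K_n=K_n(\pi^*)$, $\pi^*$ minimizes a linear functional over the transport polytope with both marginals fixed, and since the source measure is non-atomic the extreme points of that polytope are deterministic plans, so a minimizer induced by a measurable assignment map exists; your sketch says essentially this and is acceptable.
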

 \begin{proof}
 	Consider $K_n=L\int_{\mathcal{V}_n}{\hat{f}(x,y,z) \textrm{d}x\textrm{d}y\textrm{d}z}$ and the following simplex:
 		\begin{equation}
 		\hspace{-0.1cm}	S \hspace{-0.07cm}= \hspace{-0.07cm}\left\{\hspace{-0.07cm} {\boldsymbol{K} = \left( {{K_1},{K_2},...,{K_{N}}} \right) \hspace{-0.06cm}\in {\mathds{R}^{N}};\sum\limits_{n = 1}^{N} \hspace{-0.05cm}{{K_n}= L}, {K_n} \ge 0, \forall n\in \mathcal{N} \hspace{-0.02cm}} \right\}\hspace{-0.02cm}.
 		\end{equation}
 Given any vector $\boldsymbol{K}$, the optimization problem in (\ref{OPT2}) can be represented by:
 \begin{align} \label{OPTLemma}
 \mathop {\min }\limits_{{\mathcal{V}_1,...,\mathcal{V}_N}} &\mathlarger{\sum}\limits_{n = 1}^N {\bigintssss_{{\mathcal{V}_n}}{c(\boldsymbol{v},\boldsymbol{s}_n) \hat{f}(\boldsymbol{v})}} \textrm{d}\boldsymbol{v}, \\[5pt]
 \textrm{s.t.}\,\,\,\, &\int_{\mathcal{V}_n}{ \hat{f}(\boldsymbol{v}) \textrm{d}\boldsymbol{v}}=\frac{K_n}{L},\\
 &{\mathcal{V}_l} \cap {\mathcal{V}_m} = \emptyset ,\,\,\,\forall l \ne m \in \mathcal{N},\,\,\, \bigcup\limits_{n \in \mathcal{N}} {{\mathcal{V}_n}}  = \mathcal{V}, \label{VOL2v2}
 \end{align}
where $\boldsymbol{s_n}$ is the location of drone-BS $n$, $\boldsymbol{v}=(x,y,z)$, and $c(\boldsymbol{v},\boldsymbol{s}_n)= \frac{\beta K_n}{B_n\log_2\big(1+\gamma_n(x,y,z)\big)}+\frac{L}{K_n}(\frac{\beta K_n}{C_n}+g_n(\beta K_n))$.
 
This optimization problem is equivalent to the following semi-discrete optimal transport problem:
	\begin{equation} \label{Opt5}
	\mathop {\min }\limits_T \int_\mathcal{V} {c\left( {\boldsymbol{v},\boldsymbol{s}} \right)}  \hat{f}(\boldsymbol{v})\textrm{d}\boldsymbol{v}, \,\, \boldsymbol{s}=T(\boldsymbol{v}),
	\end{equation}
	where $\boldsymbol{s}$ is the location of a drone-BS, and $T(.)$ is the transport map which is related to 3D cell partition $\mathcal{V}_n$ by:\vspace{-0.1cm}
	\begin{equation}
	\left\{T(\boldsymbol{v}) = \sum\limits_{n=1}^ {N} {\boldsymbol{s}_n{\mathds{1}_{{\mathcal{V}_n}}}(\boldsymbol{v})}; \int_{{\mathcal{V}_n}} \hat{f}(\boldsymbol{v})\textrm{d}\boldsymbol{v}=\frac{K_n}{L}\right\}.
	\end{equation}
 Considering the fact that  for any semi-discrete optimal transport problem with a lower semi-continuous cost function an optimal transport map exists \cite{villani2003, Crippa}, (\ref{OPTLemma}) admits an optimal solution for any $\boldsymbol{K}\in S$. Also, since $S$ is a simplex (which is a non-empty and compact set), problem (\ref{OPT2}) admits an optimal solution over the entire $S$. This proves Lemma 1. 
 \end{proof}
 
 Next, given the existence of the optimal solution, we characterize the solution.
 
 \begin{theorem} \label{Theorem3DCells} \normalfont
 	The optimal 3D cell association for drone-BS $l$, that leads to a minimum average latency in (\ref{OPT2}), is given by:
 	\begin{align} \label{OPTcell}
 	\mathcal{V}_l^*=\Big\{(x,y,z)\big|&\alpha_l + \frac{K_l}{L} h_l(x,y,z)+\frac{\beta}{C_l}+g'_l(\beta K_l) \nonumber
 	\\
 	&\le \alpha_m + \frac{K_m}{L} h_m(x,y,z)+\frac{\beta}{C_m}+g'_m(\beta K_m),\, \forall l\neq m \Big\},
 	\end{align}	
 	where $h_l(x,y,z)\triangleq\frac{\beta }{B_l\log_2\big(1+\gamma_l(x,y,z)\big)}$, and $\alpha_l\triangleq\int_{{{\mathcal{V}}_l}} {h_l(x,y,x) \hat{f}(x,y,z)\textrm{d}x\textrm{d}y\textrm{d}z}$.
\end{theorem}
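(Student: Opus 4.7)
The plan is to leverage Lemma 1, which guarantees existence of an optimal partition, and then to derive the characterization (\ref{OPTcell}) via a first-order variational argument on the cell boundaries. First I would rewrite the per-cell contribution to the objective in (\ref{OPT2}) in the compact form $J_n(\mathcal{V}_n) = K_n\alpha_n + \tfrac{\beta K_n}{C_n} + g_n(\beta K_n)$, using the identity $\int_{\mathcal{V}_n}\tau_n^{\textrm{Tr}}\hat{f}\,\textrm{d}x\textrm{d}y\textrm{d}z = K_n\alpha_n$ together with $K_n = L\int_{\mathcal{V}_n}\hat{f}$ and $\alpha_n = \int_{\mathcal{V}_n}h_n\hat{f}$. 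The total objective then equals $\sum_n J_n$, which makes transparent how a local reassignment of cell membership affects each term.

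Next I would fix an optimal partition $\{\mathcal{V}_n^*\}$, pick an interior point $(x,y,z)\in\mathcal{V}_l^*$, and consider, for each rival index $m\ne l$, the admissible perturbation that transfers an infinitesimal volume $\epsilon$ around $(x,y,z)$ from $\mathcal{V}_l^*$ into $\mathcal{V}_m^*$; this manifestly preserves the partition constraints (\ref{VOL1})--(\ref{VOL2}). The associated first-order changes are $\delta K_l=-L\hat{f}(x,y,z)\epsilon$, $\delta\alpha_l=-h_l(x,y,z)\hat{f}(x,y,z)\epsilon$, and the signed opposites for cell $m$. Applying the product rule to $K_n\alpha_n$ and differentiating the backhaul and computation terms, I would assemble the total variation $\delta J_l + \delta J_m$ and invoke the necessary optimality condition $\delta J_l + \delta J_m \ge 0$. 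Dividing through by the strictly positive factor $L\hat{f}(x,y,z)\epsilon$ yields, for every $m\ne l$, precisely the inequality defining $\mathcal{V}_l^*$ in (\ref{OPTcell}); the reverse inclusion follows because the same argument applied at a point in $\mathcal{V}_m^*$ gives the opposite inequality, so the two sets indeed coincide up to a measure-zero boundary.

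The hard part will be that the objective is \emph{not} pointwise separable: through the factor $K_n$, the integrand of the transmission-latency integral on $\mathcal{V}_n$ couples to the mass of the entire cell, so a naive pointwise (classical Voronoi) argument fails. The decisive observation that overcomes this is that a single boundary perturbation simultaneously moves $\alpha_l$ and $K_l$; the cross term $(\delta K_l)\alpha_l + K_l(\delta\alpha_l)$ produced by the product rule on $K_l\alpha_l$ is exactly what generates the non-classical $\tfrac{K_l}{L}h_l(x,y,z)$ summand in (\ref{OPTcell}), and only after grouping it with the backhaul and computation derivatives does a clean boundary condition emerge. A secondary subtlety is that Lemma 1 only requires $g_n$ to be lower semi-continuous, whereas (\ref{OPTcell}) uses $g'_n(\beta K_n)$; I would therefore either assume $g_n$ is differentiable on $[0,\infty)$ or interpret the derivative in a generalized one-sided sense so that the first-order inequality remains meaningful, which is consistent with the form of the claimed characterization.
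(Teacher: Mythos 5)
Your proposal is correct and follows essentially the same route as the paper's Appendix A: a local perturbation that transfers a small ball around a point from $\mathcal{V}_l^*$ to $\mathcal{V}_m^*$, the optimality inequality for the perturbed partition, and a division by the transferred mass followed by the limit $\epsilon\to 0$, with your product-rule decomposition of $K_n\alpha_n$ reproducing exactly the two limits the paper computes (the $\alpha_l$ term and the $\tfrac{K_l}{L}h_l(x,y,z)$ term). Your remark on the differentiability of $g_n$ is a fair observation about a hypothesis the paper leaves implicit, but it does not change the argument.
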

\begin{proof}
	See Appendix $A$.
\end{proof}

 Using Theorem \ref{Theorem3DCells}, we can determine the optimal 3D cell partitions associated with each drone-BS that ensure the minimum average latency for drone-UEs. From (\ref{OPTcell}), we can see how the optimal 3D association depends on various network's parameters such as the distribution of drone-UEs, locations of drone-BSs, backhaul data rate, load of the network, and the computational speed. Based on these parameters, Theorem \ref{Theorem3DCells} is utilized to optimally partition a specified space and determine a minimum latency 3D cell association scheme. In this case, to minimize the average latency, a drone-BS with a faster backhaul link and computational capabilities, or higher bandwidth and transmit power will serve more drone-UEs.

 \textcolor{black}{ To solve (\ref{OPTcell}), we propose the iterative algorithm shown in Algorithm \ref{OTAlgor}. This algorithm, based on  \cite{Crippa}, converges to the
  optimal solution within a reasonable number of iterations. The complexity of this iterative approach mainly depends on computing the numerical integration in Step 6 of Algorithm 2. A practical approach to compute this integration is to use a pixel-based integration as given in \cite{merigot}. This approach is practical to implement as its complexity grows linearly with the size of the considered 3D space $\mathcal{V}$.} 
  Algorithm \ref{OTAlgor} for solving (\ref{OPTcell}) that finds the optimal 3D cell partitions proceeds as follows. The inputs are the 3D spatial distribution of drone UEs, number of drone-UEs, load, locations of the drone-BSs, computation time function, and the number of iterations, $Q$. In Algorithm \ref{OTAlgor}, $t$  represents the iteration number. First, we generate initial 3D cell partitions $\mathcal{V}_l^{(t)}$, and set $\psi _l^{(t)}(x,y,z)=0$, $\forall l \in \mathcal{N}$, with $\psi _l^{(t)}(x,y,z)$ being a pre-defined parameter which is used to update the cell partitions. Next, we update  $\psi _l^{(t+1)}(x,y,z)$, and compute $K_l$ in step \ref{1}. In step \ref{3}, we update the partitions based on (\ref{OPTcell}). Finally, we obtain the optimal 3D cell partitions and associations, at the end of the iteration.

 In summary, our approach for deployment and latency-optimal cell association in the proposed 3D cellular network is as follows. First, using the proposed truncated octahedron approach, and Theorems 1 and 2 in Section III, we determine the locations of drone-BSs as well as the co-channel cells. Then, in Section IV, we estimate the spatial distribution of drone-UEs using kernel method presented in Algorithm 1. Finally, based on the determined locations of drone-BSs and the drone-UEs' distribution, we use Algorithm 2 to derive the optimal 3D cell association for which the average total latency of serving drone-UEs is minimized.
 
 \begin{algorithm}[!t]
 	\begin{small}
 		\caption{Iterative algorithm for finding the optimal 3D cell association.}
 		\label{OTAlgor}
 		\begin{algorithmic}[1] 
 			\State \textbf{Inputs}: $ \hat{f}(x,y,z)$, $\beta$, $Q$, $L$, \textrm{Locations of drone-BSs}, $C_l$, $g_l(.)$, $\forall l \in\mathcal{N}$.
 			\State \textbf{Outputs:} $\mathcal{V}_l^*$,  $\forall l \in \mathcal{N}$. 
 			\State {Set $t=1$, generate an initial cell partitions $\mathcal{V}_l^{(t)}$, and set $\psi _l^{(t)}(x,y,z)=0$, $\forall l \in\mathcal{N}$}.
 			
 			\While {$t<Q$} 
 			
 			\State {Compute $\psi _l^{(t + 1)}(x,y,z) = \left\{ \begin{array}{l}
 				\left[ {1 - 1/t} \right]\psi _l^{(t)}(x,y,z),\hspace{1.8cm} \textrm{if}\,\,(x,y,z) \in {\mathcal{V}_l}^{(t)},\\
 				1 - \left[{1 - 1/t} \right]\left( {1 - \psi _l^{(t)}(x,y,z)}\right),\,\, \textrm{otherwise}.
 				\end{array} \right.$  } \label{phi}

 			\State {Compute ${K_l} = \bigintssss_{\mathcal{V}} {\left( {1 - \psi _l^{(t + 1)}(x,y,z)} \right)\hat{f}(x,y,z)} \textrm{d}x\textrm{d}y\textrm{d}z$, $\forall l \in \mathcal{N}$}. \label{1}

 			\State {$t \to t+1$}.\label{2}
 			
 			\State {Update cell partitions using (\ref{OPTcell})}.\label{3}
 			

 			
 			\EndWhile
 			\State 	{$\mathcal{V}_l^*= \mathcal{V}_l^{(t)}$,} \label{A*}

 		\end{algorithmic}
 	\end{small} 
 \end{algorithm}

\section{Simulation Results and Analysis}

For our simulations, we consider a cubic space of size $3\,\text{km}\times 3\,\text{km}\time3\times3\,\text{km}$ in which 18 drone-BSs are deployed based on the proposed truncated octahedron approach to serve drone-UEs. We determine the locations of drone-BSs by using (\ref{POS}) with parameters $a\in\{-1,0,1\}, b\in\{-1,0,1\}$, $c\in\{0,1\}$, and $R=400$\,m. We randomly generate a sample (i.e., a realization of a continuous distribution) of drone-UEs' locations based on a three-dimensional truncated Gaussian distribution with a specified mean and variance values. These locations' samples are then used to estimate the spatial distribution of drone-UEs using Algorithm \ref{KernalAlg}. For the computation time, we consider a quadratic function of  data size (i.e., load on each drone-BS), but our approach can accommodate any other arbitrary function. Here, the computation time for drone-BS $n$ is $g_n(\beta K_n)=\frac{(\beta K_n)^2}{\omega_n}$, with $\omega_n$ being the processing speed of drone-BS  $n$. Unless states otherwise, we use the simulation parameters listed in Table I. We compare our proposed 3D cell association with the classical SINR-based association (i.e., weighted Voronoi diagram) baseline. All statistical results are averaged over a large number of independent runs.

\begin{table}[!t]
	\normalsize
	\begin{center}
		\caption{\small Simulation parameters.}
		\vspace{-0.1cm}
		\label{TableP}
		\resizebox{13cm}{!}{
			\begin{tabular}{|c|c|c|}
				\hline
				\textbf{Parameter} & \textbf{Description} & \textbf{Value} \\ \hline \hline
				$f_c$	&     Carrier frequency     &      2\,GHz     \\ \hline 
				$P_n$	&    Drone-BS transmit power     &     0.5\,\textrm{W}    \\ \hline
				
				$N_o$	&     Noise power spectral density    &        -170\,dBm/Hz  \\ \hline
				$L$	&     Number of drone-UEs     &   200 \\ \hline

				$B_n$	&    Bandwidth for each drone-BS      &   10\,MHz \\ \hline
				
				$\alpha$	&    Path loss exponent      &   2 \\ \hline
				
				$\eta$	&     Path loss constant     &   $1.42\times 10^{-4}$ \\ \hline
				
				$\beta$	&     Packet size for drone-UE     &   10\,kb \\ \hline
				
				$q$	&     Frequency reuse factor     &   1 \\ \hline
				
				$C_n$	&     Backhaul rate for drone-BS $n$    &   $(100+n)$ Mb/s \\ \hline
				
			     $\omega_n$	&    Computation constant (i.e., speed) for each drone-BS    &   $10^2$ Tb/s \\ \hline
				
				$\mu_x, \mu_y, \mu_z$& Mean of the truncated Gaussian distribution in $x$, $y$, and $z$ directions & 1000\,m,\,1000\,m,\,1000\,m\\ \hline
				
					$\sigma_x, \sigma_y, \sigma_z$& Standard deviation of the distribution in $x$, $y$, and $z$ directions & 600\,m,\,600\,m,\,600\,m\\ \hline
				
		$\kappa_n$	&    Channel gain factor    &   1 \\ \hline
			\end{tabular}}
			
		\end{center}\vspace{-0.5cm}
	\end{table}

Fig.\,\ref{LatencyUeser} shows the average total latency as a function of the number of drone-UEs for the proposed 3D cell association and the SINR-based association schemes. As we can see from this figure, the total latency increases by increasing the number of drone-UEs. A higher number of drone-UEs leads to a higher network congestion which, in turn, increases transmission time, backhaul latency, and computation time. Fig.\,\ref{LatencyUeser} shows that, when the number of drone-UEs increases from 200 to 300, the total latency increases by 56\% and 42\% for the SINR-based association and the proposed approach. Moreover, we can see that our proposed approach significantly reduces the latency compared to the SINR association case. This is due to the fact that, in our approach, besides SINR, the impact of congestion on the transmission, backhaul, and computational latencies is also taken into account. The proposed approach avoids creating highly congested 3D cell partitions that can cause excessive latency. \textcolor{black}{From Fig.\,\ref{LatencyUeser}, we can see that our approach yields, on the average, 43.9\% reduction in the average total latency compared to the SINR-based association.}


\begin{figure}[!t]
	\begin{center}
		\vspace{-0.1cm}
		\includegraphics[width=9.5cm]{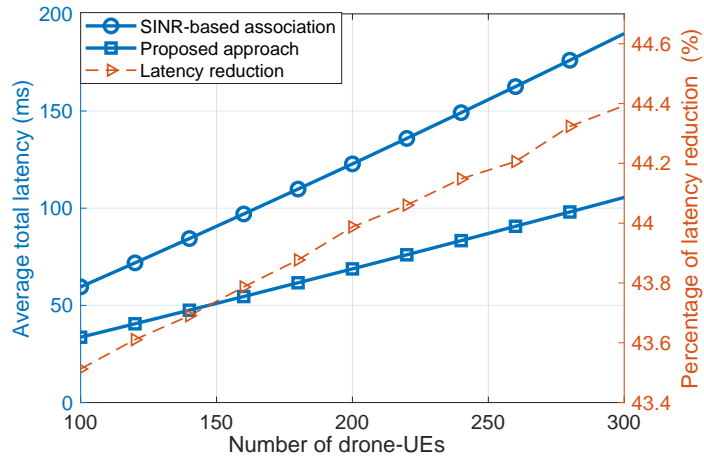}
		\vspace{-0.4cm}
		\caption{ Average total latency vs. number of drone-UEs.} \vspace{-.4cm}
		\label{LatencyUeser}
	\end{center}	\vspace{-0.5cm}
\end{figure}

Fig.\,\ref{LatencyBW} shows how the latency can be reduced by increasing the transmission bandwidth. By using more bandwidth, the transmission rate increases and, hence, the transmission latency decreases. Fig.\,\ref{LatencyBW} also reveals that our approach significantly enhances spectral efficiency compared to the SINR-based association. In essence, compared to the SINR case, the proposed approach requires less transmission bandwidth in order to meet a certain latency requirement. For instance, as we can see from Fig.\,\ref{LatencyBW}, to ensure a 70\,ms maximum total latency, our approach requires 57\% less bandwidth than the SINR-based association scheme. Another observation from Fig.\,\ref{LatencyBW} is that the rate of latency reduction decreases as the bandwidth increases. 
This is because in large bandwidth scenarios, the transmission latency can be smaller than the computation and backhaul latencies. Thus, the impact of decreasing the transmission latency on the total latency is relatively minor. \vspace{-0.1cm}


\begin{figure}[!t]
	\begin{center}
		\vspace{-0.1cm}
		\includegraphics[width=9.5cm]{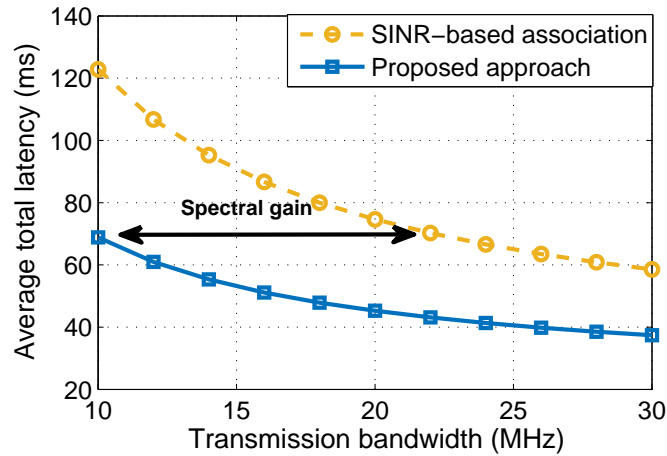}
		\vspace{-0.4cm}
		\caption{ Average total latency vs. transmission bandwidth.} \vspace{-.4cm}
		\label{LatencyBW}
	\end{center}	\vspace{-0.5cm}
\end{figure}

Fig.\, \ref{All_Latency} shows the impact of drone-UEs' load on the transmission, computation, and backhaul latencies. As expected, these three types of latency increase when the load of drone-UEs increases. Nevertheless, the rate of increase is different for different types of latency. For instance, in Fig.\,\ref{All_Latency}, the increase rate of the transmission latency is higher than that of computational latency and backhaul latency. The impact of load on each type of latency depends on two factors: 1) the function that directly relates the load to the latency, and 2) the 3D cell partitions which are related to load by (\ref{OPTcell}). In fact, while varying load, the cell partitions and different component of latency dynamically change such that the total latency is minimized.

\begin{figure}[!t]
	\begin{center}
		\vspace{-0.1cm}
		\includegraphics[width=9.0cm]{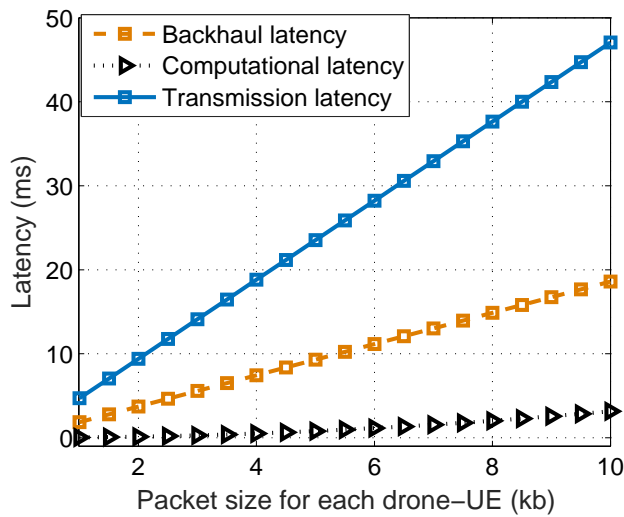}
		\vspace{-0.4cm}
		\caption{Transmission, backhaul, and computation latency vs. load of each drone-UE in the proposed approach.} \vspace{-.4cm}
		\label{All_Latency}
	\end{center}	\vspace{-0.5cm}
\end{figure}  

{\color{black}
In Fig.\,\ref{Latency_T}, we evaluate the impact of sampling time, $T$ on the latency (which depends accuracy of  3D cell partitions). To this end, we consider a time varying distribution for drone-UEs  whose mean changes by $T$. In Fig.\,\ref{Latency_T}, we consider a three-dimensional truncated Gaussian 
distribution whose mean value increases by $\nu T$, with $\nu$ being a rate of distribution change. In Fig.\,\ref{Latency_T}, we show how the latency increases by increasing $T$, for $\nu=0.1$. Note that, while the accuracy of distribution estimation increases by reducing $T$, this results in a higher complexity and overhead in the considered network.   In Fig.\,\ref{Latency_T}, we show the additional
latency that can be caused by an error in estimating the drone-UEs' distribution. 
Clearly, the total latency significantly depends on the 3D cell partitions which themselves are
a function the drone-UEs' distribution. Therefore, an estimation error in the drone-UEs' distribution
leads to a deviation from the optimality of the cell partitions. Consequently, such estimation error will increase the latency. Hence, by decreasing $T$, the network can obtain a more accurate distribution estimation and, hence, a lower latency. For instance, as we can see from Fig.\,\ref{Latency_T}, the latency decreases by 6\% when decreasing the sampling time from 20 minutes to 10 minutes, for $\nu=10$\,m/min. 


\begin{figure}[!t]
	\begin{center}
		\vspace{-0.1cm}
		\includegraphics[width=9.5cm]{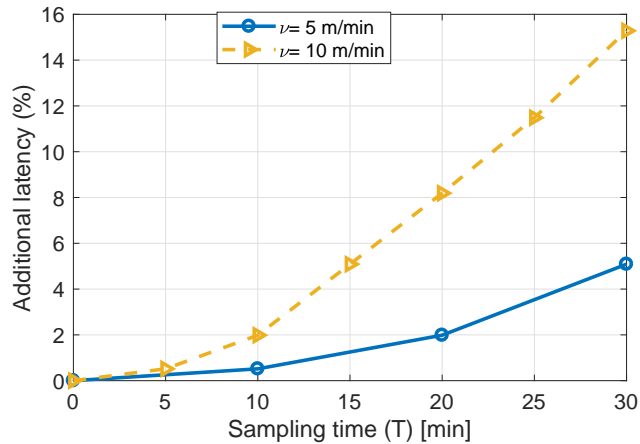}
		\vspace{-0.4cm}
		\caption{ \textcolor{black}{Additional latency vs. sampling time for distribution estimation ($T$)}} \vspace{-.4cm}
		\label{Latency_T}
	\end{center}	\vspace{-0.5cm}
\end{figure} 

}

\begin{figure}[!t]
	\begin{center}
		\vspace{-0.1cm}
		\includegraphics[width=9cm]{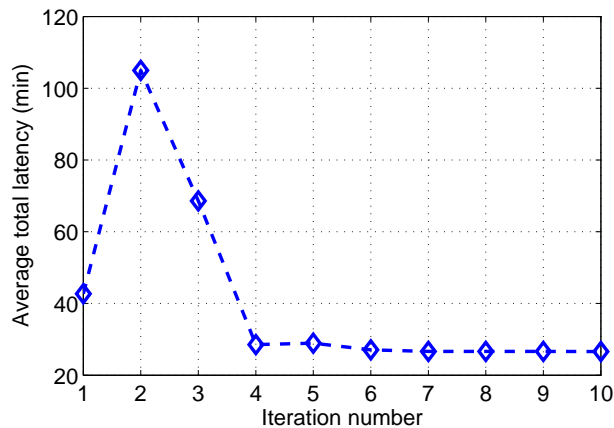}
		\vspace{-0.4cm}
		\caption{ \textcolor{black}{Convergence of Algorithm \ref{OTAlgor}.}} \vspace{-.4cm}
		\label{Covergence}
	\end{center}	\vspace{-0.1cm}
\end{figure}

\textcolor{black}{Finally, in Fig.\, \ref{Covergence}, we show the convergence of Algorithm \ref{OTAlgor} that is used to find the optimal 3D cell association by iteratively solving (\ref{OPTcell}). As we can see from this figure, Algorithm \ref{OTAlgor} converges within 6 iterations.}

\section{Conclusion}\vspace{-0.10cm}
In this paper, we have introduced a novel framework for cell association and deployment in 3D cellular networks with drone-BSs and drone-UEs. We have proposed a tractable method for the 3D deployment of drone-BSs and solved the problem of cell association with the goal of minimizing the latency of drone users.  For deployment, we have determined the drone-BSs' locations based on a truncated octahedron structure and  derived the feasible frequency reuse factor in the considered 3D network. For latency-minimal cell association, first, we have estimated the spatial distribution of the drone-UEs using the kernel density estimation method. Then, using the estimated distribution of drone-UEs and the location of drone-BSs, we have derived the optimal cell association of drone-UEs using optimal transport theory such that the latency for drone-UEs is minimized. Our results have shown that the proposed approach significantly reduces the latency of drone-UEs compared to the classical SINR-based association. Furthermore, the proposed latency-optimal cell association improves the spectral efficiency of the 3D drone-enabled wireless networks. \vspace{-0.2cm}

\section*{Appendix}\vspace{-0.2cm}
\subsection{Proof of Theorem \ref{Theorem3DCells}}\vspace{-0.2cm}
 	In Lemma 1, we proved the existence of the optimal 3D cell partitions ${\mathcal{V}}_n$, $n\in \mathcal{N}$. Now, consider two 3D partitions ${\mathcal{V}}_l$ and ${\mathcal{V}}_m$, and a point $\boldsymbol{v}_o=(x_o,y_o,z_o)\in {\mathcal{V}}_l$. Also, let $B_\epsilon(\boldsymbol{v}_o)$ be a ball with a center $\boldsymbol{v}_o$ and radius $\epsilon >0$. \textcolor{black}{Now, we generate the following new 3D partitions ${{\mathord{\buildrel{\lower3pt\hbox{$\scriptscriptstyle\frown$}}\over {\mathcal{V}}} }_n}$ (which are variants of the optimal partitions):} \vspace{-0.1cm}
 	\begin{equation}
 	\left\{ \begin{array}{l}
 	\widehat{\mathcal{V}}_l = {\mathcal{V}}_l\backslash {B_\varepsilon }({\boldsymbol{v}_o}),\\
 	\widehat{\mathcal{V}}_m = {\mathcal{V}}_m \cup {B_\varepsilon }({\boldsymbol{v}_o}),\\
 	\widehat{\mathcal{V}}_n = {\mathcal{V}}_n,\,\,\,\,n \ne l,m.
 	\end{array} \right.
 	\end{equation}
 	Let us define $p_1(K_n)\triangleq{K_n}$, $p_2(K_n)\triangleq\frac{\beta K_n}{C_n}$, ${K_\varepsilon } = L\int_{{B_\varepsilon }({\boldsymbol{v}_o})} { \hat{f}(x,y,z)\textrm{d}x\textrm{d}y\textrm{d}z}$, and $\widehat{K}_n = L\int_{\widehat{\mathcal{V}}_n} { \hat{f}(x,y,z)\textrm{d}x\textrm{d}y\textrm{d}z}$. Considering the optimality of ${\mathcal{V}}_n$, $n\in \mathcal{N}$, we have:
 	\begin{align}
 	&\hspace{0.1cm}\sum\limits_{n \in \mathcal{N}} {\int_{{{\mathcal{V}}_n}} { {{p_1}\left( {{K_n}} \right)h_n(x,y,z)}  \hat{f}(x,y,z)\textrm{d}x\textrm{d}y\textrm{d}z} }+p_2(K_n)+g_n(\beta K_n) \nonumber \\
 	& {\mathop  \le \limits^{(a)} }\sum\limits_{n \in \mathcal{N}} {\int_{\widehat{\mathcal{V}}_n} { {{p_1}\left( \widehat{K}_n \right)h_n(x,y,z)}  \hat{f}(x,y,z)\textrm{d}x\textrm{d}y\textrm{d}z} } +p_2(\widehat{K}_n)+g_n(\beta \widehat{K}_n). \label{SUM}
 	\end{align}
 	Canceling out the common terms in (\ref{SUM}) leads to: 
 	\begin{align}
 	\hspace{0.1cm}&\int_{{{\mathcal{V}}_l}} { {{p_1}\left( {{K_l}} \right)h_l(x,y,z) }  \hat{f}(x,y,z)\textrm{d}x\textrm{d}y\textrm{d}z} +p_2(K_l)+g_l(\beta K_l)\nonumber \\
 	&+ \int_{{{\mathcal{V}}_m}} { {{p_1}\left( {{K_m}} \right)h_m(x,y,z)}  \hat{f}(x,y,z)\textrm{d}x\textrm{d}y\textrm{d}z}+p_2(K_m)+g_m(\beta K_m)\nonumber\\
 	& \le \int_{{{\mathcal{V}}_m} \cup {B_\varepsilon }({\boldsymbol{v}_o})} { {{p_1}\left( {{K_m+K_{\epsilon}}} \right)h_m(x,y,z) } \hat{f}(x,y,z)\textrm{d}x\textrm{d}y\textrm{d}z} +p_2(K_m)+g_m(\beta (K_m+K_{\epsilon})) \nonumber\\ &+\int_{{{\mathcal{V}}_l}\backslash {B_\varepsilon }({\boldsymbol{v}_o})} { {{p_1}\left( {{K_l-K_{\epsilon}}} \right)h_l(x,y,z) }  \hat{f}(x,y,z)\textrm{d}x\textrm{d}y\textrm{d}z} +p_2(K_l-K_{\epsilon})+g_l(\beta (K_l-K_{\epsilon})),\\
 	&\int_{{{\mathcal{V}}_l}} { ({{p_1}\left( {{K_l}} \right)-{p_1}\left( K_l-{{K_{\epsilon}}} \right))h_l(x,y,z) } \hat{f}(x,y,z)\textrm{d}x\textrm{d}y\textrm{d}z} +p_2(K_l)-p_2(K_l-K_{\epsilon})\nonumber\\
 	&+g_l(\beta K_l)-g_l(\beta ( K_l-K_{\epsilon}))\nonumber +\int_{{B_\varepsilon }({\boldsymbol{v}_o})} { {{p_1}\left( {{K_l-K_{\epsilon}}} \right)h_l(x,y,z) }  \hat{f}(x,y,z)\textrm{d}x\textrm{d}y\textrm{d}z}\nonumber\\
 	&\le \int_{{{\mathcal{V}}_m}} { ({{p_1}\left( {{K_m+{{K_{\epsilon}}}}} \right)-{p_1}\left( K_m \right))h_l(x,y,z) }  \hat{f}(x,y,z)\textrm{d}x\textrm{d}y\textrm{d}z} +p_2(K_m+K_{\epsilon})-p_2(K_m)\nonumber\\
 	& +g_m(\beta(K_m+K_{\epsilon}))-g_m(\beta K_m)+ \int_{{B_\varepsilon }({\boldsymbol{v}_o})} { {{p_1}\left( {{K_m} + {K_\varepsilon }} \right)h_m(x,y,z) }  \hat{f}(x,y,z)\textrm{d}x\textrm{d}y\textrm{d}z},\label{ineq}
 	\end{align}
 	where $(a)$ comes from the fact that ${\mathcal{V}}_n$, $\forall n \in \mathcal{N}$ are optimal 3D partitions and, thus, any variation of such optimal partitions, shown by $\widehat{\mathcal{V}}_n$, does not lead to a better solution.

 	Note that, ${K_\epsilon } = L\int_{{B_\epsilon }({\boldsymbol{v}_o})} { \hat{f}(x,y,z)\textrm{d}x\textrm{d}y\textrm{d}z}$. Now, we multiply both sides of the inequality in (\ref{ineq}) by $\frac{1}{K_{\epsilon}}$ and take the limit when $\epsilon \to 0$. Then, we use the following equalities: 
 	\begin{align}
 	&\mathop {\lim }\limits_{\varepsilon  \to 0} {K_\epsilon } = 0,\label{C1}\\
 	&\mathop {\lim }\limits_{{K_\epsilon } \to 0} \frac{{{p_1}({K_l}) - {p_1}({K_l} - {K_\epsilon })}}{{{K_\epsilon }}} = {p'_1}({K_l}),\label{C2}\\
 	&\mathop {\lim }\limits_{{K_\epsilon } \to 0} \frac{{{p_1}({K_m} + {K_\epsilon }) - {p_1}({K_m})}}{{{K_\epsilon }}} = {p'_1}({K_m}),\label{C3}
 	\end{align}
 	\begin{footnotesize}
 		\begin{equation} 
 		\mathop {\lim }\limits_{{K_\epsilon } \to 0} \frac{ {\bigintssss_{{B_\epsilon }({\boldsymbol{v}_o})} { {{p_1}\left( {{K_l} - {K_\epsilon }} \right)h_l(x,y,z)} \hat{f}(x,y,z)\textrm{d}x\textrm{d}y\textrm{d}z}}}{K_\epsilon}=\mathop {\lim }\limits_{{K_\epsilon } \to 0} \frac{ {{p_1}\left( {{K_l}} \right)h_l(\boldsymbol{v}_o)} {\bigintssss_{{B_\epsilon }({\boldsymbol{v}_o})} {  \hat{f}(x,y,z)\textrm{d}x\textrm{d}y\textrm{d}z}}}{K_\epsilon}=\frac{{p_1}\left( {{K_l}} \right)h_l(\boldsymbol{v}_o)}{L},\nonumber\label{C4}
 		\end{equation}		
 	\end{footnotesize}		
 	\begin{footnotesize}
 		\begin{equation}
 		\mathop {\lim }\limits_{{K_\epsilon } \to 0} \frac{ {\bigintssss_{{B_\varepsilon }({\boldsymbol{v}_o})} { {{p_1}\left( {{K_m} + {K_\epsilon }} \right)h_m(x,y,z)}  \hat{f}(x,y,z)\textrm{d}x\textrm{d}y\textrm{d}z}}}{K_\epsilon}=\mathop {\lim }\limits_{{K_\epsilon } \to 0} \frac{ {{p_1}\left( {{K_m}} \right)h_m(\boldsymbol{v}_o)} {\bigintssss_{{B_\epsilon }({\boldsymbol{v}_o})} {   \hat{f}(x,y,z)\textrm{d}x\textrm{d}y\textrm{d}z}}}{K_\epsilon}=\frac{{p_1}\left( {{K_m}} \right)h_m(\boldsymbol{v}_o)}{L}.\nonumber\label{C5}
 		\end{equation}		
 	\end{footnotesize}	
 	Finally, using (\ref{C1})-(\ref{C5}), we obtain:
 	\begin{align}
 	&{p'_1\left( {{K_l}} \right)\int_{{{\mathcal{V}}_l}} h_l(x,y,z) \hat{f}(x,y,z)\textrm{d}x\textrm{d}y\textrm{d}z}+\frac{1}{L} {p_1}\left( {{K_l}} \right)h_l({\boldsymbol{v}_o})+p'_2(K_l)+g'_l(\beta K_l)\nonumber\\
 	&\le {p'_1\left( {{K_m}} \right) \int_{{{\mathcal{V}}_m}} h_m(x,y,z)\hat{f}(x,y,z)\textrm{d}x\textrm{d}y\textrm{d}z}  + \frac{1}{L}{p_1}\left( {{K_m}} \right)h_m(\boldsymbol{v}_o)+p'_2(K_m)+g'_m(\beta K_m). \label{eq9}
 	\end{align}
 	Note that, in $p'_1(K_l)$, the derivative is taken with respect to a single variable which is written as ${p'_1}({K_l}) = {\left. {\frac{{d{p_1}(t)}}{{dt}}} \right|_{t = {K_l}}}$.  
 	
 	We can further proceed to derive a tractable expression for (\ref{eq9}):
 	
 	Given ${p_1}(K_l) = K_l$, we can compute ${p'_1}({K_l}) = 1$, then, using ${K_l} \hspace{-0.05cm}=\hspace{-0.05cm} \int_{{{\mathcal{V}}_l}} {\hat{f}(x,y,z)\textrm{d}x\textrm{d}y\textrm{d}z}$ leads~to:
 	\begin{align}
 	\alpha_l + \frac{1}{L}K_l h_l({\boldsymbol{v}_o})+\frac{\beta}{C_l}+g'_l(\beta K_l)
 	\le \alpha_m + \frac{1}{L}K_m h_m({\boldsymbol{v}_o})+\frac{\beta}{C_m}+g'_m(\beta K_m),
 	\end{align}
 	As a result, each optimal 3D cell association can be represented by:
 	\begin{align}
 	\mathcal{V}_l^*=\Big\{(x,y,z)\big|&\alpha_l + \frac{K_l}{L} h_l(x,y,z)+\frac{\beta}{C_l}+g'_l(\beta K_l) \nonumber
 	\\
 	&\le \alpha_m + \frac{K_m}{L} h_m(x,y,z)+\frac{\beta}{C_m}+g'_m(\beta K_m),\, \forall l\neq m \Big\},
 	\end{align}
 	which completes the proof of Theorem \ref{Theorem3DCells}.
\def\baselinestretch{1.2692}
\bibliographystyle{IEEEtran}

\bibliography{references}
\end{document}